\theoremstyle{plain}
\newtheorem{mythm}{Theorem} \numberwithin{mythm}{section}
\newtheorem{mylemma}[mythm]{Lemma}
\newtheorem{myrek}[mythm]{Remark}
\newtheorem{myex}[mythm]{Example}
\DeclareMathAlphabet\scr{U}{scr}{m}{n}
\SetMathAlphabet\scr{bold}{U}{scr}{b}{n}
  \DeclareFontFamily{U}{scr}{\skewchar\font'177}%
  \DeclareFontShape{U}{scr}{m}{n}{<-6>rsfs5<6-8>rsfs7<8->rsfs10}{}%
  \DeclareFontShape{U}{scr}{b}{n}{<-6>rsfs5<6-8>rsfs7<8->rsfs10}{}%
\numberwithin{equation}{section}
\DeclareMathOperator{\esssup}{ess\ sup}
\begin{document}
\title{Portfolio Choice with Stochastic Investment Opportunities: \\ a User's Guide\footnote{The authors thank Thomas Cay\'e, Paolo Guasoni, Martin Herdegen, Sebastian Herrmann, Marcel Nutz, Scott Robertson, and the participants of the 2013 Princeton RTG Summer School in Financial Mathematics for fruitful discussions.}}
\author{
Ren Liu\thanks{ETH Z\"urich, Departement Mathematik, R\"amistrasse 101, CH-8092, Z\"urich, Switzerland, email:
\texttt{ren.liu@math.ethz.ch}.}
\and
Johannes Muhle-Karbe\thanks{ETH Z\"urich, Departement Mathematik, R\"amistrasse 101, CH-8092, Z\"urich, Switzerland, email:
\texttt{johannes.muhle-karbe@math.ethz.ch}. Partially supported by the National Centre of Competence in Research ``Financial Valuation and Risk Management'' (NCCR FINRISK), Project D1 (Mathematical Methods in Financial Risk Management), of the Swiss National Science Foundation (SNF).}}

\date{November 7, 2013}
\pagestyle{plain}
\maketitle

\begin{abstract}
This survey reviews portfolio choice in settings where investment opportunities are stochastic due to, e.g.,  stochastic volatility or return predictability. It is explained how to heuristically \emph{compute} candidate optimal portfolios using tools from stochastic control, and how to rigorously \emph{verify} their optimality by means of convex duality. Special emphasis is placed on long-horizon asymptotics, that lead to particularly tractable results.
\end{abstract}

\noindent\textbf{Mathematics Subject Classification: (2010)} 91G10, 91G80.

\noindent\textbf{JEL Classification:} G11.

\noindent\textbf{Keywords:} Portfolio choice; stochastic opportunity sets; stochastic optimal control; convex duality; long-run asymptotics.

\section{Introduction}

Dynamic portfolio choice considers how to trade to maximize some performance criterion.  In continuous time, problems of this kind were first solved in the seminal works of Merton \cite{merton.69,merton.71}. In a Markovian setting, he used tools from stochastic  control to characterize optimal portfolios by means of partial differential equations. For investors with constant relative risk aversion and for a \emph{constant opportunity set}, i.e.,\ constant interest rates, expected excess returns, and volatilities, Merton found that it is optimal to hold a constant fraction of wealth in each risky asset. Perhaps surprisingly, these target weights are the same for all planning horizons.

Since Merton's groundbreaking results, portfolio choice has evolved in various directions. One strand of research has been the development of general existence and uniqueness results by means of the duality theory of convex analysis.\footnote{For complete markets, cf.\ Pliska \cite{pliska.86}, Karatzas, Lehoczky and Shreve \cite{karatzas.al.87}, as well as Cox and Huang \cite{cox.huang.89,cox.huang.91}. Extensions to the considerably more involved case of incomplete markets were studied by He and Pearson \cite{he.pearson.91a,he.pearson.91b}  as well as Karatzas, Lehoczky, Shreve and Xu \cite{karatzas.al.91}, culminating in the work of Kramkov and Schachermayer \cite{kramkov.schachermayer.99,kramkov.schachermayer.03}, where general semimartingale prices and arbitrary preferences are treated under minimal assumptions.} Other studies have focused on the effect of various market imperfections, such as trading constraints (e.g., \cite{cvitanic.karatzas.92,grossman.villa.92,grossman.zhou.93}), transaction costs (e.g., \cite{magill.constantinidis.76,constantinides.86,davis.norman.90,dumas.luciano.91,shreve.soner.94}), and taxes (e.g., \cite{dybvig.koo.96, dammon.al.01,ben.al.10}). In parallel, there has been a lot of work on understanding the effects of \emph{stochastic investment opportunity sets}, i.e., interest rates \cite{korn.kraft.01,munk.sorensen.04}, excess returns \cite{kim.omberg.96,barberis.00,wachter.02} and volatilities  \cite{chacko.viceira.05,benth.al.03,kraft.05,liu.07,kallsen.muhlekarbe.10} driven by some stochastic state variable. 

Numerous excellent textbooks treat portfolio choice with constant investment opportunities (cf., e.g., \cite{korn.97,fleming.soner.06,pham.09}). Here, we present such an introductory and self-contained -- yet rigorous -- treatment of more involved settings with stochastic investment opportunities.\footnote{More practical aspects are discussed in \cite{campbell.viceira.03}.}  In these models, portfolio choice focuses on determining \emph{intertemporal hedging terms}, through which investors take into account the future fluctuations of the investment opportunity set. These generally lead to time-inhomogeneous portfolios, that complicate both computation and interpretation. Hence, we place particular emphasis on \emph{long-run asymptotics} (cf.\ Guasoni and Robertson \cite{guasoni.robertson.12} and the references therein), where the horizon is postponed to infinity to produce simple time-homogenous policies that nevertheless perform well over reasonable finite horizons. Our goal is to introduce readers to a two-step procedure for solving such problems: first, candidates for the value function and the optimal portfolio are identified through partially heuristic arguments from stochastic control. Then, the optimality of these candidates is verified rigorously by means of convex duality. We illustrate this approach for the Heston-type stochastic volatility model studied in \cite{liu.07,kraft.05,kallsen.muhlekarbe.10} and the model with predictable returns analyzed by \cite{kim.omberg.96}. In addition to introducing the mathematical tools needed to rigorously analyze such models, we also discuss the economic significance of the results. The hedging terms arising due to stochastic volatility are typically small, as are the corresponding welfare effects \cite{chacko.viceira.05}. In contrast, predictable returns lead to substantial intertemporal hedging and large welfare gains~\cite{barberis.00}, at least in settings that ignore market frictions (compare \cite{lynch.tan.11}) and parameter uncertainty (see \cite{barberis.00}). These results appear to be well-known in the financial literature (cf., e.g., \cite{campbell.viceira.03} and the references therein), but do not seem to have percolated widely enough among financial mathematicians.

The remainder of this survey is organized as follows. The basic setting is introduced in Section~2. Subsequently, we discuss how to heuristically compute candidate optimal portfolios using the dynamic programming approach of stochastic optimal control. Finally, in Section~4, we explain how to rigorously verify the optimality of these candidates using convex duality.

\section{Setup}

\subsection{Financial Market}

On a filtered probability space $(\Omega,\scr{F},(\scr{F}_t)_{t \geq 0},\mathbb{P})$ supporting all the random quantities appearing in the sequel, we consider a frictionless financial market with two assets. The first one is \emph{safe}, in that it appreciates at the constant riskless rate $r>0$:
$$dS^0_t=r S^0_t dt, \quad S^0_0=1.$$
The second asset is \emph{risky}, with returns $dS_t/S_t$ of the following form:
$$\frac{dS_t}{S_t}=(\mu(Y_t)+r) dt+\sigma(Y_t) dW_t, \quad S_0=s>0.$$
Here, $W_t$ is a standard Brownian motion, and the \emph{excess return} $\mu(Y_t)$ as well as the \emph{volatility} $\sigma(Y_t)$ are deterministic functions of a \emph{state variable} $Y_t$, which follows an autonomous diffusion:
\begin{equation}\label{eq:dyny}
dY_t=b(Y_t)dt+a(Y_t)dW^Y_t, \quad Y_0=y.
\end{equation}
The process $W^Y_t$ is another Brownian motion, with constant correlation $\rho \in [-1,1]$ relative to $W_t$; the functions $\mu(y),\sigma(y), b(y), a(y)$ are assumed to be sufficiently regular to warrant the existence of a unique strong solution of the above stochastic differential equations.

In this survey, we illustrate how to tackle portfolio choice problems in this setting by a combination of tools from stochastic optimal control and convex duality. Particular emphasis is placed on the following three benchmark models that can be solved in closed form:

\begin{myex}\label{ex:models}
\begin{enumerate}
\item \emph{Black-Scholes Model}: In this simplest benchmark, both the expected excess return and the volatility are constant, $\mu(y)=\mu>0$ and $\sigma(y)=\sigma>0$. That is, there is no additional state variable and the \emph{investment opportunity set} $(r,\mu,\sigma)$ is constant.
\item \emph{Heston-type model}: This extension of the Black-Scholes model acknowledges that volatility is time varying and mean reverting. Modeling the infinitesimal variance process by Feller's square-root diffusion then leads to the following model:\footnote{In the context of option pricing, this model was proposed by Heston \cite{heston.93}. Under a risk-neutral measure, the drift rate of the risky asset is fixed to the riskless return $r$. For portfolio choice under the physical probability, one also needs to specify the risk premium $\mu(Y_t)$. The specification $\mu(Y_t)=\mu_S Y_t$ linear in variance is due to Liu \cite{liu.07}. It leads to particularly tractable results, because it corresponds to a Black-Scholes model after passing from usual calendar time to business time by means of a suitable random time change.}
\begin{align}
\frac{d S_t}{S_t} &= (\mu_S Y_t+r) dt + \sqrt{Y_t} d W_t,\label{eq:heston1}\\
d Y_t &= \lambda_Y (\bar Y- Y_t) dt + \sigma_Y \sqrt{Y_t} dW^Y_t.\label{eq:heston2}
\end{align}
Here, $\mu_S>0$ specifies the risk premium per unit variance, $\bar Y>0$ is the long-run mean of the variance process $Y_t$, $\sigma_Y>0$ governs the rate at which $Y_t$ fluctuates around $\bar Y$, and $\lambda_Y>0$ describes the speed at which $Y_t$ mean-reverts back to its long-run mean $\bar Y$. The two Brownian motions $W_t, W^Y_t$ have constant correlation $\rho \in [-1,0]$; empirical studies typically find substantially negative values (cf., e.g., \cite{pan.02}).\footnote{This is the so-called ``leverage effect'', i.e., the empirical observation that large downward moves of the asset price tend to be associated with upward moves of the volatility.} Finally, the Feller condition $2\lambda_Y \bar Y > \sigma_Y^2$ ensures that the variance process $Y_t$ remains strictly positive. 
\item \emph{Kim and Omberg Model}: A different extension of the Black-Scholes model proposed by Kim and Omberg \cite{kim.omberg.96} is to keep the volatility constant, while assuming that excess returns are \emph{predictable}.\footnote{Here, predictability refers to the ability to forecast returns using some auxiliary state variable, and should not be confused with the notion from stochastic calculus.} This means that they follow a mean-reverting process correlated with the Brownian motion $W_t$ driving the returns $dS_t/S_t$ of the risky asset:\footnote{Predictors that have been proposed in the empirical literature include stock characteristics, such as the dividend yield and earnings-price ratio, or interest rates, such as the term-spread and the corporate-spread.}
\begin{align}
\frac{d S_t}{S_t} &= (Y_t+r) dt + \sigma d W_t, \label{eq:ko1}\\
d Y_t &= \lambda_Y (\bar Y- Y_t) dt + \sigma_Y d W_t^{Y}. \label{eq:ko2}
\end{align}
Here, the excess return follows an Ornstein-Uhlenbeck process with long-run mean $\bar Y>0$, volatility $\sigma_Y>0$, and mean-reversion speed $\lambda_Y>0$. The Brownian motions $W_t$ and $W^Y_t$ have constant correlation $\varrho \in [-1,0]$; empirical studies suggest values close to $-1$ (cf.\ \cite{barberis.00}).
\end{enumerate}
\end{myex}

Fix an \emph{initial endowment} $x>0$. \emph{Trading strategies} are modeled by continuous processes $\pi_t$ integrable with respect to the return process $dS_t/S_t$; here, $\pi_t$ denotes the fraction of wealth invested in the risky asset at time $t$. The returns of the associated wealth process $X^\pi_t$ are then given by the corresponding convex combination of the safe and risky returns: 
\begin{equation}\label{eq:dyn}
dX^\pi_t/X^\pi_t=(1-\pi_t)dS^0_t/S^0_t+\pi_t dS_t/S_t = (r+\pi_t\mu(Y_t))dt +\pi_t \sigma(Y_t) dW_t, \quad X^\pi_0=x.
\end{equation}
As the solution of this linear stochastic differential equation, the wealth process corresponding to the risky fraction $\pi_t$ is therefore given by the stochastic exponential
\begin{equation}\label{eq:stoexp}
X^\pi_t=x\scr{E}\left(\int_0^\cdot (r+\pi_u\mu(Y_u)) du +\int_0^\cdot \pi_u \sigma(Y_u) dW_u\right)_t.
\end{equation}
Note that the parametrization of trading strategies in terms of risky weights $\pi_t$ rules out doubling strategies, because it automatically leads to wealth processes that remain strictly positive.

\subsection{Preferences}

We consider an investor with preferences described by a \emph{power utility function} $U(x)=x^{1-\gamma}/(1-\gamma)$ with \emph{constant relative risk aversion} $0<-xU''(x)/U'(x)=\gamma \neq 1$. Given some planning horizon $T>0$, the investor's goal is to maximize expected utility from terminal wealth:
\begin{equation}\label{eq:tw}
E[U(X^\pi_T)] \to \max!,
\end{equation}
where $X^\pi_T$ runs through the set of terminal payoffs generated by portfolios $\pi_t$.

\section{Computation}

\subsection{The Hamilton-Jacobi-Bellman Equation}\label{sec:hjb}

In this section, we use methods from the theory of stochastic optimal control to \emph{heuristically} derive a candidate optimal strategy for the portfolio choice problem \eqref{eq:tw}.  The discussion is kept on an informal level, since we later turn to tools from convex duality for verification in Section \ref{sec:verification}. For a full exposition of the theory of stochastic control and complete proofs see, e.g., \cite{fleming.soner.06,korn.97,pham.09} and the references therein.

In the portfolio choice problem \eqref{eq:tw}, the initial endowment $x$, the initial value $y$ of the state variable, and the time to maturity $T$ are fixed. The key insight for the dynamic programming approach of stochastic optimal control is that it is easier to solve a dynamic version of this problem, where all of these variables are allowed to vary. To this end, consider the so-called \emph{value function}, which describes the maximal utility that can be attained by trading optimally on $[t,T]$ starting from some wealth $x$ and value $y$ of the state variable:
\begin{equation}\label{eq:vf}
V(t,x,y)= \esssup_{(\pi_u)_{u \in [t,T]}} E\left[ U\left(X^\pi_T\right)\Big|X^\pi_t=x, Y_t=y\right],
\end{equation}
where $(\pi_u)_{u\in [t,T]}$ runs through all portfolios on $[t,T]$.\footnote{Note that the value function does not depend on the current values $S^0_t,S_t$ of the safe and risky assets, because these do not feature in the dynamics \eqref{eq:dyn} and \eqref{eq:dyny} of the wealth process and the state variable, respectively, for \emph{feedback controls} $\pi_t=\pi(t,X^\pi_t,Y_t)$. The general rule of thumb is that one needs to add enough variables so that their concatenation with the wealth process forms a Markov process. Then, the optimal control and in turn the conditional expectation \eqref{eq:vf} should indeed be deterministic functions of the initial states.}  Then, evidently, the value function satisfies the terminal condition 
\begin{equation}\label{eq:terminal1}
V(T,x,y)=U(x).
\end{equation} 
The key step now is to derive a partial differential equation that describes its evolution before the terminal time, and to use the latter to characterize the optimal policy. Here, we follow the heuristic argument presented by Pham \cite{pham.09}. The starting point is the \emph{dynamic programming principle of stochastic control}, which states that
\begin{equation}\label{eq:dpp}
V(t,x,y) \geq  E\left[V(t+h,X^{\pi}_{t+h},Y_{t+h}) | X_t^\pi=x, Y_t=y\right],
\end{equation}
for $h \in [0,T-t]$ and all portfolios $\pi_t$. This means that following any policy on $[t,t+h]$ and then switching to the optimizer on $[t+h,T]$ cannot -- on average -- produce better results than holding the optimal portfolio for $[t,T]$ on this entire time interval. This implies that the optimization problem on $[t,T]$ can be solved recursively: taking an arbitrary starting value $X^\pi_{t+h}$ as given, one first determines the optimal control on the interval $[t+h,T]$ and the corresponding value function $V(t+h,X^{\pi}_{t+h},Y_{t+h})$. Then, one searches for the optimal control on $[t,t+h]$ by maximizing $E[V(t+h,X^\pi_{t+h},Y_{t+h})| X_t^\pi=x,Y_t = y]$ over all controls on $[t,t+h]$. Pasting together the two portfolios then should lead to the optimal strategy. 

By applying the dynamic programming principle on infinitesimally short intervals, one can derive a partial differential equation for the value function. Indeed, consider a constant control $\pi$ applied over a small time interval $[t,t+h]$.  Assuming that the value function is smooth enough, It\^o's formula yields 
\begin{align}
&V(t+h,X^\pi_{t+h},Y_{t+h})= V(t,X^\pi_t,Y_t) \label{eq:dynv}\\
&\quad+\int_t^{t+h} \left(V_t+V_x X^\pi_u r+ V_x X^\pi_u \pi \mu + V_y b+\frac{1}{2} V_{xx}  (X^\pi_u)^2 \pi^2 \sigma^2 + V_{xy} X^\pi_u \pi \rho \sigma a + \frac{1}{2} V_{yy}  a^2\right)du \notag\\
&\quad+\int_t^{t+h} V_x X^\pi_u \pi \sigma dW_u +\int_t^{t+h} V_y a dW^Y_u,\notag
\end{align}
where the arguments of the functions are omitted for brevity. Inserting this into \eqref{eq:dpp}, we obtain
\begin{align*}
0 \geq E\Bigg[\int_t^{t+h} \Big( V_t &+V_x X^\pi_u r+ V_x X^\pi_u \pi \mu + V_y b\\
&+\frac{1}{2} V_{xx}  (X^\pi_u)^2 \pi^2 \sigma^2 + V_{xy} X^\pi_u \pi \rho \sigma a + \frac{1}{2} V_{yy}  a^2 \Big) du \Big| X^\pi_t=x, Y_t=y\Bigg],
\end{align*}
assuming that the stochastic integrals with respect to the Brownian motions $W_t, W^Y_t$ are true and not only local martingales, and therefore vanish in expectation. Now, divide by $h$ and then let $h$ tend to zero, obtaining 
\begin{equation*}
0 \geq V_t+V_x x r+ V_x x \pi \mu + V_y b+\frac{1}{2} V_{xx}  x^2 \pi^2 \sigma^2 + V_{xy} x \pi \rho \sigma a + \frac{1}{2} V_{yy}  a^2.
\end{equation*}
Since this holds for \emph{any} constant control $\pi$, it follows that
\begin{equation}\label{eq:hjbpre}
0 \geq V_t +\sup_\pi\left\{V_x x r+ V_x x \pi \mu + V_y b+\frac{1}{2} V_{xx}  x^2 \pi^2 \sigma^2 + V_{xy} x \pi \rho \sigma a + \frac{1}{2} V_{yy}  a^2\right\}.
\end{equation}
On the other hand, \eqref{eq:dpp} should hold with equality for the \emph{optimal} control $\widehat\pi_t$. Repeating the derivation of \eqref{eq:hjbpre}, this suggests that this equation actually holds with equality, where the pointwise supremum is attained for the optimal portfolio $\widehat\pi_t$:
\begin{equation}\label{eq:HJB1a}
0 = V_t +\sup_\pi\left\{V_x x r+ V_x x \pi \mu + V_y b+\frac{1}{2} V_{xx}  x^2 \pi^2 \sigma^2 + V_{xy} x \pi \rho \sigma a + \frac{1}{2} V_{yy}  a^2\right\}.
\end{equation}
This PDE is called the \emph{dynamic programming equation} or \emph{Hamilton-Jacobi-Bellman} (henceforth \emph{HJB}) \emph{equation}. In the present setting with only one risky asset, the candidate optimal portfolio is readily determined by pointwise optimization as a function of time, wealth, and the state variable: 
\begin{equation}\label{eq:cand1}
\widehat{\pi}(t,x,y)=-\frac{V_{x}(t,x,y)}{x V_{xx}(t,x,y)} \frac{\mu(y)}{\sigma(y)^2} -\frac{V_{xy}(t,x,y)}{x V_{xx}(t,x,y)} \frac{\rho a(y)}{\sigma(y)}.
\end{equation}
After inserting this expression, the HJB equation \eqref{eq:HJB1a} for the value function $V(t,x,y)$ reads as follows:
\begin{equation}\label{eq:HJB1}
V_t=\frac{1}{2} \frac{V_x^2}{V_{xx}} \frac{\mu}{\sigma^2} +\frac{V_{x}V_{xy}}{V_{xx}} \frac{\mu\rho a}{\sigma}+\frac{1}{2}\frac{V_{xy}^2}{V_{xx}}\rho^2 a^2-V_x x r-V_y b -\frac{1}{2} V_{yy} a^2.
\end{equation}

\subsection{Homotheticity}

The above heuristics were valid for an arbitrary utility function. Power utilities $U(x)=x^{1-\gamma}/(1-\gamma)$, however, are particularly tractable because they distinguish themselves through their \emph{homotheticity}, $U(x)=x^{1-\gamma} U(1)$. Since the wealth process of any portfolio $\pi_t$ is proportional to the initial endowment $x$ by \eqref{eq:stoexp}, this property allows to factor out wealth from the value function:
\begin{align}\label{eq:hom}
V(t,x,y)=&\esssup_{(\pi_u)_{u \in [t,T]}} E\left[U(X^\pi_T) | X^\pi_t=x, Y_t=y\right] \notag\\
= &x^{1-\gamma} \esssup_{(\pi_u)_{u \in [t,T]}}  E\left[ U(X^\pi_T) | X^\pi_t=1, Y_t=y\right]=x^{1-\gamma} V(t,1,y),
\end{align}
Define the \emph{reduced value function} 
$$v(t,y)=(1-\gamma)V(t,1,y).$$
Then, the candidate optimal portfolio \eqref{eq:cand1} can be written as
\begin{equation}\label{eq:cand2}
\widehat{\pi}_t=\frac{\mu}{\gamma \sigma^2} +\frac{\rho a}{\gamma \sigma} \frac{v_y}{v},
\end{equation}
and the corresponding \emph{reduced HJB equation} reads as
\begin{equation}\label{eq:HJB2}
v_t=\frac{\gamma-1}{\gamma}\left(\left(\frac{\mu^2}{2\sigma^2}+\gamma r\right)v+\frac{\mu\rho a}{\sigma}v_y+\frac{\rho^2 a^2}{2} \frac{v_y^2}{v}\right)-v_y b -\frac{1}{2} v_{yy} a^2.
\end{equation}
In view of the terminal condition \eqref{eq:terminal1} for the original value function and the homotheticity \eqref{eq:hom}, the corresponding terminal condition is 
\begin{equation}\label{eq:terminal2}
v(T,y)=1.
\end{equation}
The PDE \eqref{eq:HJB2} can be linearized by a suitable power transformation \cite{zariphopoulou.01}. This allows to prove existence and uniqueness in a general setting given sufficient regularity of the coefficients $\mu(y),\sigma(y),b(y)$, and $a(y)$. However, explicit solutions are generally not available. To shed more light on the various effects that can arise in dynamic portfolio choice, from now on we therefore focus our attention on the three benchmark models from Example \ref{ex:models}, which can be solved in closed form.

\subsection{Constant Investment Opportunities}\label{sec:BS1}

First, consider the Black-Scholes model with constant expected returns and volatilities. In this simplest example, there is no extra state variable $Y_t$, so that the value function only depends on time and current wealth. Accordingly, all the partial $y$-derivatives vanish, which implies that the candidate \eqref{eq:cand2} for the optimal portfolio is constant:
$$\widehat{\pi}=\frac{\mu}{\gamma\sigma^2}.$$
The HJB equation \eqref{eq:HJB2} for the corresponding candidate value function reduces to an ODE:
$$v'(t)=-(1-\gamma)\left(\frac{\mu^2}{2\gamma\sigma^2}+r\right)v(t).$$
Together with the terminal condition \eqref{eq:terminal2} and the homotheticity \eqref{eq:hom}, this leads to 
$$V(t,x)=\frac{x^{1-\gamma}}{1-\gamma}\exp\left((1-\gamma)\left(r+\frac{\mu^2}{2\gamma\sigma^2}\right)(T-t)\right).$$
\paragraph{Discussion}
The candidate optimal policy in the Black-Scholes model is to hold a constant proportion $\widehat\pi$ of wealth in the risky asset, irrespective of the investment horizon. This constant weight is given by the infinitesimal \emph{mean-variance ratio} $\mu/\sigma^2$ of the risky excess return, divided by risk aversion $\gamma$. The portfolio $\widehat\pi$ is \emph{myopic}, in that it is fully determined by the local dynamics of the return process, but ignores how much of the investment period remains left. 

The corresponding utility is equivalent to the one obtained from fully investing at a fictitious \emph{equivalent safe rate} $r+\frac{\mu^2}{2\gamma\sigma^2}$. Here, the outperformance of the actual safe rate is determined by the square of the instantaneous \emph{Sharpe ratio} $\mu/\sigma$ of the risky asset, scaled by risk aversion. Hence, with constant expected excess returns and volatilities, investors with constant relative risk aversion use the Sharpe ratio to rank the attractiveness of different risky assets.

\subsection{Stochastic Volatility}\label{sec:heston1}

In the Heston-type model (\ref{eq:heston1}-\ref{eq:heston2}), there is a nontrivial state variable $Y_t$, so that the additional partial $y$-derivatives in the HJB equation \eqref{eq:HJB2} come into play. After inserting the dynamics (\ref{eq:heston1}-\ref{eq:heston2}), one finds that all coefficients in the PDE \eqref{eq:HJB2} are affine linear functions of the state variable. This suggests the following exponentially affine ansatz for the reduced value function:
\begin{equation}\label{eq:affine}
v(t,y)=\exp\left(A(t)+B(t)y\right),
\end{equation}
for smooth functions $A(t),B(t)$ satisfying $A(T)=B(T)=0$ to match the terminal condition \eqref{eq:terminal2}. With \eqref{eq:affine} as well as (\ref{eq:heston1}-\ref{eq:heston2}), the HJB equation \eqref{eq:HJB2} becomes
\begin{align*}
A'(t)+B'(t) y= \frac{\gamma-1}{\gamma}\left(\left(\frac{\mu_S^2y}{2}+\gamma r\right)+\mu_S \rho \sigma_Y y B(t)+\frac{\rho^2\sigma_Y^2}{2}B(t)^2 \right)-B(t)\lambda_Y(\bar {Y}-y)-\frac{\sigma_Y^2}{2}B(t)^2,
\end{align*}
after canceling the common  factor $v(t,y)$. This equation should hold for all values of $y$. Hence, separating the terms proportional to resp.\ independent of $y$ leads to the following system of ordinary differential equations for the functions $B(t), A(t)$:
\begin{align}
B'(t) &=cB(t)^2+bB(t)+a, \quad B(T) = 0,\label{eq:ode1}\\
A'(t) &= (\gamma-1)r-\lambda \bar{Y} B(t), \quad A(T)=0,\label{eq:ode2}
\end{align}
where
\begin{equation}\label{eq:abcheston}
c=\left(\frac{\gamma-1}{\gamma}\rho^2-1\right)\frac{\sigma_Y^2}{2}, \quad b=\left(\frac{\gamma-1}{\gamma}\mu_S \rho \sigma_Y+\lambda_Y\right), \quad a=\frac{\gamma-1}{\gamma}\frac{\mu_S^2}{2}.
\end{equation}
The first ODE \eqref{eq:ode1} is a Riccati equation for $B(t)$, whose solution for various parameter constellations can be readily determined from integral tables (e.g., \cite[21.5.1.2]{bronstein.01}).\footnote{For low risk aversion $\gamma<1$, some parameter restrictions are needed to guarantee that a nonexplosive solution exists on the whole interval $[0,T]$, compare \cite[Section 3.1]{kallsen.muhlekarbe.10}.} If the discriminant 
\begin{equation}\label{eq:Dheston}
D=b^2-4ac
\end{equation}
is positive,\footnote{This always holds for high risk aversion $\gamma>1$. For low risk aversion $\gamma<1$, this condition is satisfied, e.g., if the volatility $\sigma_Y$ of the variance process $Y_t$ is sufficiently small compared to its mean-reversion speed $\lambda_Y$, i.e., if the stochastic volatility does not fluctuate too widely around its long-run mean.\label{discrimantpos}} it is given by \cite{kraft.05,liu.07,kallsen.muhlekarbe.10}:
\begin{equation}\label{eq:Bheston}
B(t)=-2a\frac{e^{\sqrt{D}(T-t)}-1}{e^{\sqrt{D}(T-t)}(b+\sqrt{D})-b+\sqrt{D}}.
\end{equation}
The ODE \eqref{eq:ode2} in turn allows to determine the function $A(t)$ by a simple integration:
\begin{align}
A(t)&=(1-\gamma)r(T-t)+\lambda_Y \bar{Y} \int_t^T B(u) du \label{eq:Aheston}\\
&=(1-\gamma)r(T-t)-\frac{2\lambda_Y \bar{Y} a}{b^2-D}\left((b+\sqrt{D})(T-t)-2\log\left(\frac{e^{\sqrt{D}(T-t)}(b+\sqrt{D})-b+\sqrt{D}}{2\sqrt{D}}\right)\right).\notag
\end{align}
These explicit formulas determine the candidate value function $V(t,x,y)=\frac{x^{1-\gamma}}{1-\gamma} \exp(A(t)+B(t)y)$. In view of \eqref{eq:cand2}, the candidate for the corresponding optimal portfolio is given by
\begin{equation}\label{eq:portfolio_heston_finite}
\widehat\pi_t=\frac{\mu_S}{\gamma}+\frac{\rho\sigma_Y}{\gamma}B(t).
\end{equation}

\paragraph{Discussion}
Despite the stochastic opportunity set $(r, \mu_S Y_t, \sqrt{Y}_t)$ of the Heston model, the optimal risky weight is still deterministic. However, it is no longer constant like in the Black-Scholes model. In addition to the myopic part given by the constant infinitesimal mean-variance ratio $\mu_S$ of the risky excess returns, scaled by risk aversion $\gamma$, there is an additional \emph{intertemporal hedging term}, $\frac{\rho\sigma_Y}{\gamma} B(t)$, which vanishes as the horizon $T$ nears by the terminal condition $B(T)=0$.\footnote{As pointed out by Mossin~\cite{mossin.68}, an investor behaves \emph{myopically}, if her decisions are obtained by solving a series of single-period problems, i.e., decisions are made for one period without looking ahead. This always applies for the constant investment opportunity sets discussed in the previous section, where the optimal portfolio was the same for all horizons. Here, the myopic component is the investment for a very short horizon ($T \to 0$), that is now complemented by an additional hedging term for longer maturities $T$.} Further from maturity, however, it provides a hedge against the future evolution of the state variable $Y_t$. If the latter is uncorrelated with the Brownian motion $W_t$ driving the risky returns ($\rho=0$), then no hedging occurs. For a nontrivial correlation, the direction of the hedge depends on the investor's risk aversion. In the limit $\gamma \to 1$ (which corresponds to logarithmic utility $U(x)=\log(x))$ the hedging term again vanishes, in line with the general results for log-optimal portfolios.\footnote{For logarithmic utility, no intertemporal hedging occurs even if asset dynamics are non-Markovian and include jumps, see, e.g.\ \cite{karatzas.al.91,goll.kallsen.00}. That is, logarithmic investors always behave myopically.} For risk aversion $\gamma>1$, the ODE for $B(t)$ and a comparison argument readily show $B(t)<0$; conversely, $B(t) \geq 0$ for $\gamma <1$. Since $\rho<0$ as suggested by the empirical literature (e.g., \cite{pan.02}), this implies that investors with risk aversion above unity react to uncertainty about future volatility by holding larger proportions of the risky asset. In contrast, investors with risk aversion less than one decrease their risky weights, and may in fact initially hold negative positions in the risky asset even though the latter has a positive expected excess return \cite{liu.07}. 

These initially puzzling results can be understood as follows. In the specification \eqref{eq:heston1}, the infinitesimal Sharpe ratio of the risky asset is given by $\mu_S \sqrt{Y_t}$. Hence, it is increasing in volatility and therefore negatively correlated with the Brownian motion $W_t$ driving the risky returns. As a result, good returns today tend to occur together with low Sharpe ratios, and therefore bad investment opportunities,\footnote{This identification makes sense at least if the model is not too far from Black-Scholes (where performance is indeed measured by the Sharpe ratio) or risk aversion is not too far from unity (for log-utility, performance is also measured by the average squared Sharpe ratio even in general models).} tomorrow. Conversely, low returns today tend to occur with high Sharpe ratios and therefore good investment opportunities, tomorrow. A more risk averse investor ($\gamma>1$) will invest in this hedging opportunity by purchasing extra shares of the risky asset. Conversely, less risk averse investors add a negative ``hedging portfolio'' to the myopic component. For the latter, good returns today are correlated with good investment opportunities tomorrow, thereby allowing less risk averse investor to speculate on this scenario.  

These qualitative features delicately depend on the concrete model under consideration. For example, Chacko and Viceira \cite{chacko.viceira.05} consider a stochastic volatility model of the form
$$\frac{dS_t}{S_t}=(\mu_{CV}+r) dt +\sqrt{1/Y_t}dW_t,$$
for a constant expected excess return $\mu_{CV}>0$, and where the \emph{inverse} $Y_t$ of the variance process follows a square-root process. The state variable $Y_t$ is then positively correlated with $W_t$, to ensure a negative correlation between asset and volatility shocks. As a result, the model's instantaneous Sharpe ratio is given by $\mu_{CV} \sqrt{Y_t}$, and is therefore positively correlated with current asset returns. Consequently, the intertemporal hedging terms turn out to have the opposite signs compared to the specification of Liu \cite{liu.07} considered above. More risk averse investors then purchase fewer risky shares due to the clustering of bad current and future investment opportunities. In contrast, less risk averse investors speculate on this by holding larger risky positions \cite{chacko.viceira.05}.

\begin{figure}
\centering
\includegraphics[width=0.8\textwidth]{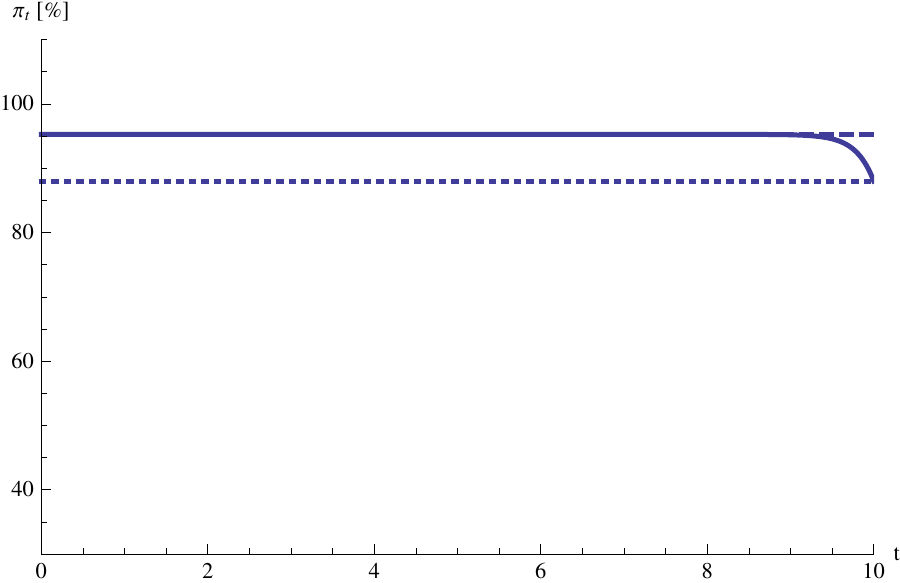}
\caption{\label{fig:portfolio_heston}
Heston-type model: optimal portfolio $\widehat\pi_t$ for horizon $T=10$ years (solid), long-run optimal portfolio $\widehat\pi^\infty$ (dashed), and Black-Scholes portfolio $\mu_S/\gamma$ for same mean return and volatility (dotted). Risk aversion is $\gamma=5$ and the (yearly)  model parameters are $r=0.033$, $\mu_S=4.4$, $\lambda_Y=5.3$, $\bar Y=0.024$, $\sigma_Y=0.38$ and $\rho=-0.57$ (cf.\ \cite[Tables 1 and 6]{pan.02}).
}
\end{figure}

In view of this lack of robustness, it is reassuring to notice that the magnitude of these effects does not turn out to be very large for typical parameter estimates. This is illustrated in Figure~\ref{fig:portfolio_heston}. For the parameter estimates of Pan~\cite{pan.02}, the size of the intertemporal hedging term turns out to be one order of magnitude smaller than the myopic component $\mu_S/\gamma$, which corresponds to the risky weight in a Black-Scholes model with the same mean returns $\mu_S \bar Y$ and variance $\bar Y$. Similar quantitative results are reported by \cite{campbell.viceira.03,chacko.viceira.05}.

\subsection{Predictable Returns}\label{sec:ko}

Now, let us turn to the model (\ref{eq:ko1}-\ref{eq:ko2}) with predictable returns proposed by Kim and Omberg~\cite{kim.omberg.96}. With the dynamics (\ref{eq:ko1}-\ref{eq:ko2}), the coefficients in the reduced HJB equation \eqref{eq:HJB2} are no longer affine linear in the state variable like for the Heston model, but in fact quadratic. This suggests the following exponentially quadratic ansatz for the reduced value function:
\begin{equation}\label{eq:quadratic}
v(t,y)=\exp\left(A(t)+B(t)y+\frac{1}{2}C(t)y^2\right),
\end{equation}
for smooth functions $A(t), B(t), C(t)$ satisfying $A(T)=B(T)=C(T)=0$ to match the terminal condition \eqref{eq:terminal2}. With \eqref{eq:quadratic} and (\ref{eq:ko1}-\ref{eq:ko2}), the reduced HJB equation \eqref{eq:HJB2} can be rewritten as:
\begin{align*}
A'(t)+B'(t)y+\frac{1}{2} C'(t)y^2=&\frac{\gamma-1}{\gamma}\left(\left(\frac{y^2}{2\sigma^2}+\gamma r\right)+\frac{y\rho\sigma_Y}{\sigma}(B(t)+C(t)y)+\frac{\rho^2\sigma_Y^2}{2}(B(t)+C(t)y)^2\right)\\
&\quad -\lambda(\bar Y-y)(B(t)+C(t)y)-\frac{\sigma_Y^2}{2}(C(t)+(B(t)+C(t)y)^2).
\end{align*}
This equation again has to hold for all values $y$ of the state variable. Hence, separating into terms proportional to $y^2$, $y$, and independent of $y$, this leads to the following system of ODEs for the three functions $C(t), B(t), A(t)$:
\begin{align*}
\frac{C'(t)}{2} &= c C(t)^2+bC(t)+a, \quad C(T)=0,\\
B'(t) &= 2c B(t)C(t)+bB(t)-\lambda_Y \bar{Y} C(t), \quad B(T)=0,\\
A'(t) &= (\gamma-1)r+cB(t)^2-\lambda_Y \bar{Y} B(t)-\frac{\sigma_Y^2}{2} C(t), \quad C(T)=0,
\end{align*}
where
\begin{equation}\label{eq:abcko}
c=\left(\frac{\gamma-1}{\gamma}\rho^2-1\right)\frac{\sigma_Y^2}{2}, \quad b=\frac{\gamma-1}{\gamma} \frac{\rho\sigma_Y}{\sigma}+\lambda, \quad a=\frac{\gamma-1}{\gamma} \frac{1}{2\sigma^2}.
\end{equation}
The first of the above ODEs is a Riccati differential equation for the function $C(t)$. The various parametric forms of its solution can be found in integral tables (cf., e.g., \cite[21.5.1.2]{bronstein.01}).\footnote{As for the Heston-type model, some parameter restrictions are needed for $\gamma<1$ to guarantee that a nonexplosive solution exists on the whole interval $[0,T]$; see Kim and Omberg \cite{kim.omberg.96} for a thorough discussion.} Given $C(t)$, the second equation is an inhomogeneous linear ODE for $B(t)$, that can be solved by variation of constants. Finally, with $C(t)$ and $B(t)$ at hand, $A(t)$ is obtained from the third equation by a simple integration. After rather tedious but straightforward calculations, this leads to the explicit formulas reported by Kim and Omberg \cite{kim.omberg.96}.

With the candidate value function at hand, \eqref{eq:cand2} in turn determines the corresponding candidate for the optimal portfolio as
$$\widehat\pi_t=\frac{Y_t}{\gamma \sigma^2}+\frac{\rho\sigma_Y}{\gamma\sigma}(B(t)+C(t)Y_t).$$

\paragraph{Discussion}
In contrast to the Heston-type model considered above, the candidate optimal portfolio for the model of Kim and Omberg (\ref{eq:ko1}-\ref{eq:ko2}) depends on the current value of the state variable $Y_t$. This applies both to the myopic component $\frac{Y_t}{\gamma\sigma^2}$ and to the intertemporal hedging term $\frac{\rho\sigma_Y}{\gamma\sigma}(B(t)+C(t)Y_t)$. 

In general, the sign of the hedging term therefore of course also depends on the state variable $Y_t$ (see \cite{kim.omberg.96} for a detailed discussion). For simplicity, let us consider the typical case where $Y_t$ is close to its long-run mean $\bar{Y}$ and therefore positive. One readily infers from the respective ODEs that $B(t)$ and $C(t)$ are negative for high risk aversion ($\gamma>1$) and positive for low risk aversions ($\gamma<1$). Since correlation $\rho$ is negative, in line with the empirical results of Barberis \cite{barberis.00} for a state variable representing the dividend yield, this implies that predictable returns increase risky investments if risk aversion $\gamma$ is higher than unity, and conversely for $\gamma<1$. This can again be understood by considering the correlation between current returns and future investment opportunities, measured by the model's infinitesimal Sharpe ratio. The latter is given by $Y_t/\sigma$, and is therefore negatively correlated with the Brownian motion $W_t$ driving the asset returns. Investors with high risk aversion therefore buy extra risky shares to hedge against bad returns tomorrow, whereas investors with low risk aversion speculate on this scenario by buying fewer risky shares.

With return predictability, the size of these hedging terms is considerably bigger than for the stochastic volatility model of Section \ref{sec:heston1}. Indeed, Figure \ref{fig:portfolio_ko} illustrates that they can easily become as big as the myopic part of the portfolio for typical parameter values.

\begin{figure}
\centering
\includegraphics[width=0.8\textwidth]{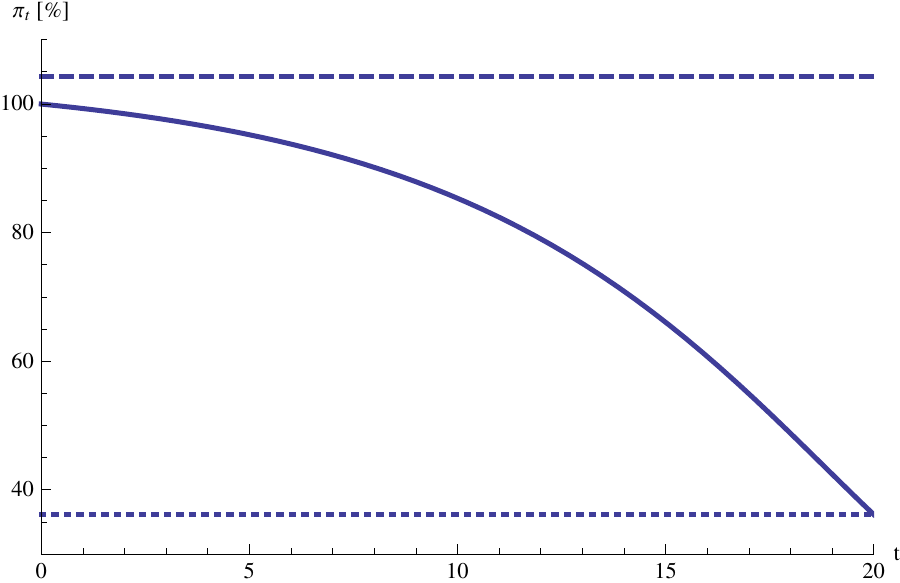}
\caption{\label{fig:portfolio_ko}
Kim and Omberg model: expected optimal portfolio $\widehat\pi_t$ for horizon $T=20$ years (solid), long-run optimal portfolio $\widehat\pi^\infty$ (dashed), and Black-Scholes portfolio $\bar{Y}/\gamma \sigma^2$ for same mean return and volatility (dotted). Risk aversion is $\gamma=5$, the (monthly) model parameters are $r=0.14\%$, $\sigma=4.36\%$, $\bar{Y}=0.34\%$, $\lambda_Y=2.26\%$, $\sigma_Y=0.08\%$, $\rho=-93.5\%$ (cf.~\cite{barberis.00,wachter.02}), and the state variable is at its long-run mean, $Y_t=\bar{Y}$.
}
\end{figure}

\subsection{Long-Run Asymptotics}\label{sec:lr1}

Unlike for the Black-Scholes model, the optimal portfolios in the models of Heston (\ref{eq:heston1}-\ref{eq:heston2}) as well as Kim and Omberg (\ref{eq:ko1}-\ref{eq:ko2}) depend on the investment horizon $T$. However, as the horizon grows, they converge to a stationary level, as illustrated in Figures \ref{fig:portfolio_heston} and \ref{fig:portfolio_ko}. This suggests that using simple temporally homogeneous portfolios should achieve ``almost'' optimal performances in both cases if the investment horizon is ``sufficiently long''. 

\paragraph{Long-Run Optimality} To make this precise, the terminal wealth problem \eqref{eq:tw} needs to be replaced by a stationary objective where the horizon is postponed to infinity. One approach is to consider utility from consumption over an infinite horizon. This works well in the context of proportional transaction costs (e.g., \cite{magill.constantinidis.76,davis.norman.90,shreve.soner.94}), but does not lead to tractable results for stochastic volatility models or predictable returns unless markets are complete (cf.\ Wachter \cite{wachter.02}). As a remedy, one can turn to the maximization of the \emph{long-run growth rate}
\begin{equation}\label{eq:lr}
\liminf_{T \to \infty} \frac{1}{T} \log U^{-1}(E[U(X^\pi_T)]) \to \max!,
\end{equation}
which also leads to tractable results for problems with proportional transaction costs \cite{dumas.luciano.91} or constraints on leverage \cite{grossman.villa.92} or drawdowns \cite{grossman.zhou.93}. This quantity corresponds to a fictitious \emph{equivalent safe rate}, at which -- in the long-run -- a full safe investment yields the same expected utility as trading optimally in the original market. Indeed, suppose the maximal growth rate in \eqref{eq:lr} is given by some $\beta$, and is attained by some portfolio $\widehat\pi_t$. Then, for large $T$:
$$E[U(X^{\widehat\pi}_T)] \approx E[U(xe^{\beta T})].$$
In the Black-Scholes model, the same portfolio is optimal for \emph{all} horizons $T$. Hence, this policy evidently also maximizes the long-run growth rate \eqref{eq:lr}. 

\paragraph{Stochastic Volatility} For the Heston model (\ref{eq:heston1}-\ref{eq:heston2}), an inspection of the explicit formulas (\ref{eq:Bheston}-\ref{eq:Aheston}) shows that the function $B(t)$ converges to a constant $B^\infty$ as the time to maturity $T-t$ grows. Likewise, the function $A(t)$ is almost linear for long horizons, $A(t) \approx A^\infty (T-t)$ for some constant $A^\infty$. Accordingly, we expect the long-run value function for a distant horizon to be of the following form:\footnote{Of course, this ansatz  does not solve any of the finite horizon problems because it does not satisfy the terminal condition $V(T,x,y)=U(x)$.}
\begin{equation}\label{eq:vlr}
V^\infty(t,x,y)=\frac{x^{1-\gamma}}{1-\gamma} \exp\left(A^\infty(T-t)+B^\infty y\right).
\end{equation}
In the limit for long horizons $T \to \infty$, the investor's equivalent safe rate then should be given by $A^\infty/(1-\gamma)$, as the importance of $B^\infty y$ vanishes for long horizons. Maybe surprisingly, however, $B^\infty$, does feature prominently in the candidate for the associated long-run optimal portfolio, obtained as the limit of its finite-horizon counterparts \eqref{eq:portfolio_heston_finite}:
$$\widehat{\pi}^\infty= \frac{\mu_S}{\gamma}+\frac{\rho\sigma_Y}{\gamma}B^\infty.$$

Of course, the long-run approach is not very useful from a computational point of view if one first solves the more complicated finite-horizon problem and then passes to the limit in the resulting explicit formulas. Instead, it is preferable to directly adapt the corresponding differential equations. Indeed, with the ansatz \eqref{eq:vlr}, the HJB equation \eqref{eq:HJB2} for the Heston model simplifies to
$$
-A^\infty=\frac{\gamma-1}{\gamma}\left(\left(\frac{\mu_S^2y}{2}+\gamma r\right)+\mu_S \rho \sigma_Y y B^\infty+\frac{\rho^2\sigma_Y^2}{2} (B^\infty)^2 \right)-B^\infty\lambda_Y(\bar {Y}-y)-\frac{\sigma_Y^2}{2}(B^\infty)^2.
$$
Separating the terms proportional to and independent of $y$, this leads to a simple quadratic equation for $B^\infty$ (rather than an ODE with quadratic right-hand side as for the finite-horizon problem), that in turn directly determines $A^\infty$:
\begin{align}
0 &= c(B^\infty)^2+bB^\infty+a,\notag \\
A^\infty &= (1-\gamma)r+\lambda_Y \bar{Y} B^\infty.\label{eq:Ainfty}
\end{align}
Here, the constants $c,b,a$ are defined as in \eqref{eq:abcheston}. Let us focus on the economically more relevant case $\gamma>1$. Then, the candidate equivalent safe rate $A^\infty/(1-\gamma)$ is decreasing in $B^\infty$. Hence, the smaller solution of the quadratic equation is the obvious candidate in this case:\footnote{Note that $c<0$. One then readily verifies that this expression indeed coincides with the long-horizon limit $T\to \infty$ in its finite-horizon counterpart \eqref{eq:Bheston}.}
$$B^\infty=\frac{\sqrt{D}-b}{2c},$$
where the discriminant $D=b^2-4ac$ is assumed to be positive as in the finite horizon case, compare Footnote~\ref{discrimantpos}. This in turn yields: 
$$A^\infty=(1-\gamma)r+ \lambda_Y \bar{Y} \frac{\sqrt{D}-b}{2c}.$$

\paragraph{Small-Noise Asymptotics} Compared to the finite-horizon value function \eqref{eq:affine}, the equivalent safe rate $A^\infty/(1-\gamma)$ is considerably easier to interpret. This is because it encodes the attractiveness of trading in the market into one number, irrespective of current values of the state variable, which are irrelevant in the long-run. To shed more light on the still rather complicated formula above, consider its asymptotics for $\sigma_Y \sim 0$, i.e., the case where the variance process only fluctuates slowly so that the Heston model is close to Black-Scholes.\footnote{A different asymptotic regime, namely portfolio choice with quickly mean-reverting stochastic volatility is studied in \cite{fouque.al.13}.} Taylor expansion yields
\begin{equation}\label{eq:expansion heston}
\frac{A^\infty}{1-\gamma}=r+\frac{\mu_S^2 \bar{Y}}{2\gamma}\left(1+(1-\gamma)\frac{\mu_S}{\gamma}\frac{\rho\sigma_Y}{\lambda_Y}+O(\sigma_Y^2)\right).
\end{equation}
Likewise, the corresponding expansion of the optimal portfolio reads as 
$$\widehat\pi^\infty_t=\frac{\mu_S}{\gamma}\left(1+(1-\gamma)\frac{\mu_S}{\gamma} \frac{\rho\sigma_Y}{2\lambda_Y}+O(\sigma_Y^2)\right).$$
As $\sigma_Y \to 0$, the Heston-type model (\ref{eq:heston1}-\ref{eq:heston2}) converges to a Black-Scholes model with mean return $\mu_S \bar{Y}$ and volatility $\sqrt{\bar{Y}}$. The equivalent safe rate for the latter is $r+\frac{\mu_S^2 \bar{Y}}{2\gamma}$, in line with the zeroth-order term in \eqref{eq:expansion heston}. The sign of the first-order correction for stochastic volatility depends on the investor's risk aversion. For high risk aversion ($\gamma>1$), the equivalent safe rate is increased, and conversely for $\gamma<1$. As explained in Section~\ref{sec:heston1} and~\ref{sec:ko}, the interpretation is that the hedging effect of negative correlation between current returns and future investment opportunities increases welfare for investors with high risk aversion, who welcome this effect. The size of the (relative) first-order adjustment is given by two times of its counterpart for the intertemporal hedging term. Both of these terms are large if i) the model is far from Black-Scholes for large fluctuations or slow mean reversion of the variance process (i.e., large $\sigma_Y/\lambda_Y$) or ii) if the instantaneous correlation $\rho$ between return and volatility shocks is strongly negative.

\paragraph{Predictable Returns}
For the model of Kim and Omberg, one can proceed similarly as for the stochastic volatility model considered above. More specifically, the ansatz
\begin{equation}\label{eq:vfkolr}
V^\infty(t,x,y)=\frac{x^{1-\gamma}}{1-\gamma} \exp\left(A^\infty(T-t)+B^\infty y+\frac{1}{2}C^\infty y^2\right)
\end{equation}
simplifies the reduced HJB equation \eqref{eq:HJB2} to the following system of algebraic equations:
\begin{align}
0 &= c (C^\infty)^2+bC^\infty+a,\label{eq:odeko1}\\
0 &= 2c B^\infty C^\infty+bB^\infty-\lambda_Y \bar{Y} C^\infty,\label{eq:odeko2}\\
A^\infty &= (1-\gamma)r-c(B^\infty)^2+\lambda_Y \bar{Y} B^\infty+\frac{\sigma_Y^2}{2} C^\infty,\label{eq:odeko3}
\end{align}
where the constants $c,b,a$ are defined as in \eqref{eq:abcko} above. Let us again focus on the economically more relevant case $\gamma>1$. Then, the putative equivalent safe rate $A^\infty/(1-\gamma)$ is decreasing in $C^\infty$, so that the smaller solution of the quadratic equation \eqref{eq:odeko1} is the right candidate. As $c<0$, it can be written as:
\begin{equation}\label{eq:Cko}
C^\infty= \frac{\sqrt{D}-b}{2c},
\end{equation}
given that the discriminant $D=b^2-4ac$ is positive.\footnote{As in the Heston model, some algebra shows that this always holds for $\gamma>1$. For $\gamma<1$ this condition is satisfied, e.g., if the volatility $\sigma_Y$ of the factor process is small enough compared to its mean-reversion speed $\lambda_Y$, i.e., if the model is not too far away from Black-Scholes.} Next, \eqref{eq:odeko2} gives
$$B^\infty=\frac{\lambda_Y \bar{Y} C^\infty}{2cC^\infty+b},$$
and $A^\infty$ is in turn determined by \eqref{eq:odeko3}. 

\paragraph{Small-Noise Expansion} To shed more light on these explicit but involved formulas, again consider a small-noise expansion for small fluctuations of the state variable, $\sigma_Y \sim 0$. After some tedious but straightforward algebra (best carried out with a computer algebra system), the expansion for the candidate equivalent safe rate is determined as: 
$$\frac{A^\infty}{1-\gamma}=r+\frac{\bar{Y}^2}{2\gamma\sigma^2}\left(1+(1-\gamma)\frac{2\rho\sigma_Y}{\gamma\lambda_Y\sigma}+O(\sigma_Y^2)\right).$$
The zeroth-order term is again the equivalent safe rate of the Black-Scholes model with constant expected excess return $\bar{Y}$ and volatility $\sigma$ that arises in the limit $\sigma_Y \to 0$. As for the first-order correction due to return predictability, notice that the average (over realizations of the state variable $Y_t$ starting from $Y_0=\bar{Y}$) of the long-run portfolio is given by
$$E[\widehat\pi^{\infty}_t]=\frac{\bar{Y}}{\gamma\sigma^2}\left(1+(1-\gamma)\frac{\rho\sigma_Y}{\gamma\lambda_Y \sigma}+O(\sigma_Y^2)\right).$$
Hence, on average, the first-order (relative) corrections to the equivalent safe rate and the long-run portfolio are again the same up to a factor 2, like in the stochastic volatility model considered above. These terms are large if i) the model is far from Black-Scholes because the state variable is either fluctuating a lot compared to the risky returns (i.e., if $\sigma_Y/\sigma$ is large) or is slowly mean-reverting for small $\lambda_Y$, or ii) if the correlation between return and state shocks are strongly negative. Like for stochastic volatility, the sign of these effects depends on risk aversion. For high risk aversion ($\gamma>1$), welfare is increased, and vice versa for $\gamma<1$. The interpretation is the same as for the stochastic volatility model above.

\section{Verification}\label{sec:verification}

In the previous section, we have \emph{heuristically computed} candidates for optimal portfolios using the dynamic programming approach of stochastic control. To \emph{verify rigorously} that these are indeed optimal, further work is necessary and different methods can be used. Here, we employ tools from convex duality,\footnote{Alternatives include verification theorems (cf., e.g., \cite{korn.97,kraft.05,davis.norman.90}) as well as arguments based on the powerful machinery of viscosity solutions (compare, e.g., \cite{fleming.soner.06,shreve.soner.94,zariphopoulou.01,pham.09}).} that are particularly suited to the long-run version \eqref{eq:lr} of the problem.\footnote{Here, we only consider three benchmark models that can be solved form. Long-run optimality in a general Markovian setting is a studied by Guasoni and Robertson \cite{guasoni.robertson.12}.} 

The basic idea is the following. The utility derived from applying our candidate portfolio evidently yields a \emph{lower} bound for the value function. To determine an \emph{upper} bound, notice that the discounted wealth process $e^{-rt}X^\pi_t$ of any portfolio $\pi_t$ is a local martingale under any equivalent martingale measure, i.e., $e^{-rt}X^\pi_t$ becomes a local martingale when multiplied with the corresponding density process $M_t$. Put differently, any undiscounted wealth process $X^{\pi}_t$ becomes a local martingale when multiplied with $Z_t=e^{-rt}M_t$, where $M_t$ is the density process of any equivalent martingale measure. By Fatou's lemma, the product $X^\pi_t Z_t$ therefore is a supermatingale because it is positive. Hence the following simple consequence of H\"older's inequality (cf.\ \cite[Lemma 5]{guasoni.robertson.12}) bounds the utility of \emph{any} portfolio by a moment of \emph{any} equivalent martingale measure or, more generally, any \emph{supermartingale deflator} $Z_t$ for which $Z_0=1$ and whose product $X^\pi_t Z_t$ with any wealth process is a supermartingale:

\begin{mylemma}\label{lem:dual}
Let $X$, $Z$ be strictly positive random variables such that $E[XZ] \leq x$. Then: 
\begin{equation}\label{eq:duality}
\frac{1}{1-\gamma} E\left[X^{1-\gamma}\right] \leq \frac{x^{1-\gamma}}{1-\gamma} E\left[Z^{1-1/\gamma}\right]^\gamma,
\end{equation}
and equality holds if and only if $E[XZ]=x$ and, for some $\alpha>0$,
\begin{equation}\label{eq:focdual}
X^{-\gamma}=\alpha Z.
\end{equation}
\end{mylemma}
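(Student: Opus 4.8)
The plan is to reduce the claim to a single application of H\"older's inequality, exploiting the pointwise algebraic identity
$$X^{1-\gamma}=(XZ)^{1-\gamma}\bigl(Z^{1-1/\gamma}\bigr)^{\gamma},$$
valid for strictly positive $X,Z$ (expand the right-hand side and note that the powers of $Z$ combine to $Z^{1-\gamma}Z^{\gamma-1}=1$). The exponents $1-\gamma$ and $\gamma$ are conjugate in the sense that the reciprocals $p=1/(1-\gamma)$ and $q=1/\gamma$ satisfy $1/p+1/q=1$. Applying H\"older to the two factors $(XZ)^{1-\gamma}$ and $(Z^{1-1/\gamma})^{\gamma}$ with these exponents therefore estimates $E[X^{1-\gamma}]$ in terms of $E[XZ]^{1-\gamma}$ and $E[Z^{1-1/\gamma}]^{\gamma}$, after which the constraint $E[XZ]\le x$ together with the monotonicity of $t\mapsto t^{1-\gamma}$ finishes the argument.

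The one genuine subtlety — and the step I expect to require the most care — is that the sign of $1-\gamma$, and hence the direction of every inequality, depends on whether $\gamma<1$ or $\gamma>1$, so the two regimes must be handled separately. For $\gamma<1$ both exponents $p,q$ exceed $1$, so ordinary H\"older gives $E[X^{1-\gamma}]\le E[XZ]^{1-\gamma}E[Z^{1-1/\gamma}]^{\gamma}$; since $1-\gamma>0$ the bound $E[XZ]\le x$ gives $E[XZ]^{1-\gamma}\le x^{1-\gamma}$, and dividing by the positive number $1-\gamma$ yields \eqref{eq:duality}. For $\gamma>1$ one has $p<0<q<1$, so I would instead invoke the reverse H\"older inequality, which flips the estimate to $E[X^{1-\gamma}]\ge E[XZ]^{1-\gamma}E[Z^{1-1/\gamma}]^{\gamma}$; now $1-\gamma<0$ makes $t\mapsto t^{1-\gamma}$ decreasing, so $E[XZ]\le x$ gives $E[XZ]^{1-\gamma}\ge x^{1-\gamma}$, and dividing by the negative number $1-\gamma$ reverses the inequality once more to land on the same conclusion \eqref{eq:duality}. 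The uniform statement thus emerges precisely because two sign reversals compensate.

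For the equality characterization I would trace back through the two inequalities used. Equality in \eqref{eq:duality} forces equality in both the (reverse) H\"older step and in $E[XZ]\le x$, the latter being exactly the condition $E[XZ]=x$. Equality in H\"older's inequality holds if and only if the two factors are, after raising to the conjugate powers, proportional almost surely, i.e.\ $XZ=c\,Z^{1-1/\gamma}$ for some constant $c>0$; this simplifies to $X=c\,Z^{-1/\gamma}$, equivalently $X^{-\gamma}=\alpha Z$ with $\alpha=c^{-\gamma}>0$, which is \eqref{eq:focdual}. Conversely, substituting $X^{-\gamma}=\alpha Z$ together with $E[XZ]=x$ into both sides of \eqref{eq:duality} and checking that they coincide establishes sufficiency.

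As an alternative that sidesteps the reverse-H\"older case distinction entirely, I would note that \eqref{eq:duality} also follows from the Fenchel--Young inequality for the convex conjugate of $U(x)=x^{1-\gamma}/(1-\gamma)$, namely $\tilde U(z)=\frac{\gamma}{1-\gamma}z^{1-1/\gamma}$. Applying $U(X)\le \tilde U(\lambda Z)+\lambda XZ$ pointwise for a scaling parameter $\lambda>0$, taking expectations, using $E[XZ]\le x$, and minimizing the resulting bound over $\lambda$ — the minimizer being $\lambda=(E[Z^{1-1/\gamma}]/x)^{\gamma}$ — reproduces \eqref{eq:duality} in a single computation valid for both signs of $1-\gamma$, with the equality conditions coming from pointwise equality in Young's inequality ($\lambda Z=X^{-\gamma}$, giving \eqref{eq:focdual}) and from $E[XZ]=x$.
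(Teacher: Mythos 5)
Your main argument is correct and is essentially the paper's own proof. For $\gamma<1$, your H\"older application to the factorization $X^{1-\gamma}=(XZ)^{1-\gamma}\bigl(Z^{1-1/\gamma}\bigr)^{\gamma}$ with exponents $1/(1-\gamma)$ and $1/\gamma$ is literally the paper's first case; for $\gamma>1$, the reverse H\"older inequality you invoke is exactly what the paper derives inline, by applying ordinary H\"older to $E[Z^{1-1/\gamma}]=E\bigl[(XZ)^{1-1/\gamma}X^{-(1-1/\gamma)}\bigr]$ with the roles of $X$ and $Z$ swapped and then multiplying by $x^{1-\gamma}/(1-\gamma)<0$ --- this swap is the standard proof of reverse H\"older, so your two compensating sign reversals and the paper's multiplication by a negative constant are the same bookkeeping. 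The equality discussion also matches: both arguments reduce to $E[XZ]=x$ together with almost-sure proportionality of $XZ$ and $Z^{1-1/\gamma}$, equivalently $X^{-\gamma}=\alpha Z$ as in \eqref{eq:focdual}. Your Fenchel--Young alternative is a genuinely different and arguably cleaner route: it avoids the case distinction on the sign of $1-\gamma$ entirely, the minimizer $\lambda=\bigl(E[Z^{1-1/\gamma}]/x\bigr)^{\gamma}$ does return the right-hand side of \eqref{eq:duality}, and \eqref{eq:focdual} emerges as the pointwise Young equality condition $\lambda Z=X^{-\gamma}$; this is in the spirit of the convex-duality literature the paper cites, though the paper itself sticks with the direct H\"older computation. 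The only caveat --- shared by the paper, so not a gap relative to it --- is that degenerate integrability cases (e.g., $E[Z^{1-1/\gamma}]=\infty$ or $E[X^{1-\gamma}]=\infty$) are glossed over in both arguments.
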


Before recalling the short proof from \cite[Lemma 5]{guasoni.robertson.12}, some remarks are in order here. 

\begin{myrek}
\begin{enumerate}
\item The duality bound \eqref{eq:duality} provides a \emph{sufficient} condition for optimality. Indeed, a payoff $X^{\widehat\pi}_T$ generated by some portfolio is evidently optimal for the finite-horizon problem \eqref{eq:tw} if one can find some supermartingale deflator $\hat{Z}_t$ such that \eqref{eq:duality} holds with equality for the terminal value $\widehat{Z}_T$. For the long-run problem \eqref{eq:lr}, it instead suffices to show that the long-run growth rates of both sides in \eqref{eq:duality} coincide for the wealth process of a candidate portfolio $\widehat{\pi}^\infty_t$ and some supermartingale deflator $\widehat{Z}^\infty_t$:
\begin{equation}\label{eq:duallong}
\liminf_{T \to \infty} \frac{1}{(1-\gamma)T} \log E\left[(X_T^{\widehat{\pi}^\infty})^{1-\gamma}\right] = \liminf_{T \to \infty} \frac{\gamma}{(1-\gamma)T}\log E\left[(\widehat{Z}^\infty_T)^{1-1/\gamma}\right].
\end{equation}
\item With substantial additional effort, one can show that equality in the upper bound \eqref{eq:duality} is also \emph{necessary} for optimality. That is, there always exists a dual variable satisfying the \emph{first-order condition} \eqref{eq:focdual}. This duality between attainable payoffs and supermartingale deflators in fact holds for much more general preferences and asset price dynamics, as shown for complete markets by \cite{pliska.86,karatzas.al.87,cox.huang.89,cox.huang.91} and for incomplete markets by \cite{he.pearson.91a,he.pearson.91b,karatzas.al.91,kramkov.schachermayer.99}. Since this is not needed for our purposes, we do not go into details here but instead refer interested readers to the survey \cite{schachermayer.04} and the references therein.
\item The optimal dual variable $\widehat{Z}_t$ attaining the duality bound \eqref{eq:duality} is intimately linked to ``marginal utility-based prices'' for the investor at hand. Indeed, the investor's utility remains the same if she sells a \emph{small} claim with payoff $H$ at $T$ for a price of $E[\widehat{Z}_T H]$. This ``marginal price'' of a small claim is given by the expectation of its payoff ``deflated'' by the dual variable $\widehat{Z}_T$, resp.\ its ``risk-neutral'' expectation $E^\mathbb{Q}[H]$ if $\widehat{Z}_T$ is the density of an equivalent martingale measure $\mathbb{Q}$. Hence, the investor's ``marginal pricing rule'' for small risks is selected from the possibly infinitely many equivalent martingale measures according to her preferences by the first-order condition \eqref{eq:focdual}. For precise statements and proofs, see \cite{davis.97,karatzas.kou.96,kramkov.sirbu.06}.

\end{enumerate}
\end{myrek}

\begin{proof}[Proof of Lemma \ref{lem:dual}]
To ease notation, set $p=1-\gamma$ and $q=p/(p-1)=1-1/\gamma$. For $p \in (0,1)$, H\"older's inequality (with $\widetilde{p}=1/p$ and $\widetilde{q}=\widetilde{p}/(\widetilde{p}-1)=1/(1-p)$) yields
\begin{align*}
E[X^p] = E[(XZ)^p Z^{-p}] \leq E[(XZ)^{p\widetilde{p}}]^{1/\widetilde{p}} E[Z^{-p\widetilde{q}}]^{1/\widetilde{q}}=E[XZ]^{1/\widetilde{p}}E[Z^q]^{1-p} \leq x^p E[Z^q]^{1-p},
\end{align*}
where we have used $E[XZ] \leq x$ for the last inequality. Dividing by $p>0$, the assertion follows. Similarly, if $p<0$ and in turn $q<1$, H\"older's inequality with $\widetilde{p}=1/(1-q)$ and $\widetilde{q}=\widetilde{p}/(\widetilde{p}-1)=1/q$ yields
$$ E[Z^q]^{1-p} = E[(XZ)^q X^{-q}]^{1-p} \leq E[(XZ)^{q\widetilde{q}}]^{(1-p)/\widetilde{q}} E[X^{-q\widetilde{p}}]^{(1-p)/\widetilde{p}}= E[XZ]^{-p} E[X^p] \leq x^{-p} E[X^p],$$
so that the assertion follows by multiplying with $x^p/p<0$. In both cases, the inequalities become equalities if and only if $E[XZ]=x$ and $X^{-\gamma}$ is proportional to $Z$. 
\end{proof}

\subsection{Constant Investment Opportunities}

In the Black-Scholes model it is straightforward to apply Lemma \ref{lem:dual} to verify the optimality of our candidate portfolio $\widehat\pi=\mu/\gamma\sigma^2$ from Section \ref{sec:BS1}. Since the market is complete, there is only one equivalent martingale measure, so that we only have to show that the bound \eqref{eq:duality} is tight for the latter and the wealth process of our candidate portfolio:

\begin{mylemma}
Fix a time horizon $T>0$. Then, the wealth process $X^{\widehat{\pi}}_T$ of the portfolio $\widehat{\pi}=\mu/\gamma\sigma^2$ and the density 
$$M_T=\exp\left(-\frac{\mu}{\sigma}W_T -\frac{\mu^2}{2\sigma^2}T\right)$$
of the unique equivalent martingale measure satisfy:
\begin{equation}\label{eq:boundsBS}
E\left[(X^{\widehat\pi}_T)^{1-\gamma}\right]= x^{1-\gamma} \exp\left((1-\gamma)\left(r+\frac{\mu^2}{2\gamma\sigma^2}\right)T\right) = x^{1-\gamma} E\left[(e^{-rT}M_T)^{1-\frac{1}{\gamma}}\right]^{\gamma}. 
\end{equation}
By Lemma \ref{lem:dual}, the portfolio $\widehat\pi$ is therefore optimal for the finite-horizon problem \eqref{eq:tw} and also long-run optimal in the sense of \eqref{eq:lr}, with equivalent safe rate $r+\frac{\mu^2}{2\gamma\sigma^2}$.
\end{mylemma}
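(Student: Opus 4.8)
The plan is to exploit the fact that in the Black--Scholes model every quantity appearing in \eqref{eq:boundsBS} is log-normal, so that all expectations reduce to the Gaussian Laplace transform $E[e^{\lambda W_T}]=e^{\lambda^2 T/2}$. I would establish the two equalities in turn, starting from the left-hand side.

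For the first equality, I would use \eqref{eq:stoexp} to write the wealth process of the constant portfolio $\widehat\pi=\mu/\gamma\sigma^2$ explicitly as the geometric Brownian motion
$$X^{\widehat\pi}_T = x\exp\left(\left(r+\widehat\pi\mu-\tfrac12\widehat\pi^2\sigma^2\right)T+\widehat\pi\sigma W_T\right).$$
Raising to the power $1-\gamma$ and taking expectations, the Gaussian transform gives
$$E\left[(X^{\widehat\pi}_T)^{1-\gamma}\right]=x^{1-\gamma}\exp\left((1-\gamma)\left(r+\widehat\pi\mu-\tfrac12\widehat\pi^2\sigma^2\right)T+\tfrac12(1-\gamma)^2\widehat\pi^2\sigma^2\, T\right).$$
Factoring $(1-\gamma)T$ out of the exponent and inserting $\widehat\pi\mu=\mu^2/\gamma\sigma^2$ and $\widehat\pi^2\sigma^2=\mu^2/\gamma^2\sigma^2$, the two $\widehat\pi^2\sigma^2$-contributions combine to $-\mu^2/2\gamma\sigma^2$, so that the exponent collapses to $(1-\gamma)(r+\mu^2/2\gamma\sigma^2)T$, which is exactly the first equality.

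For the second equality there are two routes. The direct route is to repeat the same log-normal computation for $e^{-rT}M_T=\exp(-rT-\tfrac{\mu}{\sigma}W_T-\tfrac{\mu^2}{2\sigma^2}T)$: compute $E[(e^{-rT}M_T)^{1-1/\gamma}]$, raise it to the power $\gamma$, and simplify the exponent using $\gamma(1-1/\gamma)=-(1-\gamma)$ and $\gamma(1-1/\gamma)^2=(1-\gamma)^2/\gamma$; the cross terms telescope to yield again $(1-\gamma)(r+\mu^2/2\gamma\sigma^2)T$. The more conceptual route is to verify the first-order condition \eqref{eq:focdual} of Lemma \ref{lem:dual} directly: the coefficient of $W_T$ in $(X^{\widehat\pi}_T)^{-\gamma}$ equals $-\gamma\widehat\pi\sigma=-\mu/\sigma$, which coincides with that in $e^{-rT}M_T$, so $(X^{\widehat\pi}_T)^{-\gamma}=\alpha\, e^{-rT}M_T$ for a deterministic $\alpha>0$; together with $E[X^{\widehat\pi}_T e^{-rT}M_T]=x$ this forces equality in \eqref{eq:duality}, which is precisely the second equality once the left-hand value is known.

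I expect no serious obstacle, since the content is a pair of explicit Gaussian moment evaluations. The only points requiring care are the algebraic collapse of the quadratic exponents (best tracked by factoring out $(1-\gamma)$ at the outset) and the integrability justifications behind the expectations: one should note via Novikov's condition that $M_t$ is a true martingale, so that it genuinely defines the equivalent martingale measure, and---if the conceptual route is taken---that $X^{\widehat\pi}_t e^{-rt}M_t$ is a true rather than merely local martingale, so that $E[X^{\widehat\pi}_T e^{-rT}M_T]=x$ and the supermartingale bound of Lemma \ref{lem:dual} is actually attained. Finally, because the finite-horizon identity \eqref{eq:boundsBS} holds verbatim for every $T$, dividing its logarithm by $(1-\gamma)T$ and letting $T\to\infty$ shows both sides of \eqref{eq:duallong} equal $r+\mu^2/2\gamma\sigma^2$, which gives the claimed long-run optimality and the stated equivalent safe rate.
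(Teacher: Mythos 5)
Your proposal is correct and follows essentially the same route as the paper: write $X^{\widehat\pi}_T$ explicitly via \eqref{eq:stoexp}, observe that both $X^{\widehat\pi}_T$ and $e^{-rT}M_T$ are log-normal, and evaluate both sides of \eqref{eq:boundsBS} through the Gaussian moment-generating function. Your additional remarks -- the alternative verification through the first-order condition \eqref{eq:focdual}, the true-martingale justification, and the explicit passage to the long-run limit in \eqref{eq:duallong} -- merely spell out details the paper leaves implicit, and the algebra checks out.
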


\begin{proof}
By \eqref{eq:stoexp}, the wealth process corresponding to the constant portfolio $\widehat\pi$ is given by
$$X^{\widehat\pi}_T= x \exp\left(\left(r+\widehat\pi \mu-\frac{\widehat\pi^2\sigma^2}{2}\right)T+\widehat\pi \sigma W_T\right)=x \exp\left(\left(r+\left(1-\frac{1}{2\gamma}\right)\frac{\mu^2}{\gamma\sigma^2}\right)T +\frac{\mu}{\gamma\sigma}W_T\right).$$
Hence, both the payoff $X^{\widehat\pi}_T$ and the terminal value $Z_T=e^{-rT}M_T$ of the supermartingale deflator corresponding to the unique equivalent martingale measure are log-normally distributed. \eqref{eq:boundsBS} in turn follows from the formula for the corresponding moment-generating function.
\end{proof}

\subsection{Stochastic Volatility}\label{sec:heston2}

Let us now turn to the Heston-type model (\ref{eq:heston1}-\ref{eq:heston2}). Here, we focus on the long-horizon asymptotics from Section \ref{sec:lr1}; verification theorems for the finite-horizon problem can be found in \cite{kraft.05,kallsen.muhlekarbe.10}. 
\paragraph{Heuristic Derivation of the Dual Deflator}
Compared to the Black-Scholes model a new difficulty arises here: since the Heston model is incomplete, there are infinitely many equivalent martingale measures, so that one needs to identify a suitable candidate to match the duality bound \eqref{eq:duality}. To this end,  notice that the value function (which we determined heuristically in Section \ref{sec:heston1} to identify the optimal portfolio) also encodes the dual martingale measure. Informally, this can be seen as follows. For any small $\varepsilon$, the dynamic programming principle \eqref{eq:dpp} implies that, conditional on the information at time $t$, the optimal policy $\widehat\pi_t$ performs at least as well as shifting an extra amount $\varepsilon$ into the risky account from time $t$ to maturity $T$:
$$E[U(X^{\widehat\pi}_T)|\scr{F}_t] \geq E\left[U\left(\Big(1-\frac{\varepsilon}{X^{\widehat\pi}_t}\Big)X^{\widehat\pi}_T+\frac{\varepsilon}{S_t}S_T\right) \Big| \scr{F}_t \right].$$
Since this holds both for small positive and negative values of $\varepsilon$, a first-order Taylor expansion yields:
\begin{equation}\label{eq:FOC} 
0=E\left[U'(X^{\widehat\pi}_T)\Big(\frac{X^{\widehat\pi}_T}{X^{\widehat\pi}_t}-\frac{S_T}{S_t}\Big)\Big| \scr{F}_t\right].
\end{equation}
Recall that $U'(x)x=(1-\gamma)U(x)=(1-\gamma)V(T,x,y)$, and notice that the arguments of Section \ref{sec:hjb} show that $V(t,X^{\widehat\pi}_t,Y_t)$ is a martingale for the optimizer $\widehat\pi_t$.\footnote{To see this, notice that the dynamic programming equation \eqref{eq:HJB1a} implies that the drift terms in \eqref{eq:dynv} vanish for the optimizer $\widehat\pi_t$.} As a result: 
$$E\left[U'(X^{\widehat\pi}_T)\frac{X^{\widehat\pi}_T}{X^{\widehat\pi}_t}\Big|\scr{F}_t\right]=\frac{(1-\gamma)E[V(T,X^{\widehat\pi}_T,Y_T)|\scr{F}_t]}{X^{\widehat\pi}_t}=\frac{(1-\gamma)V(t,X^{\widehat\pi}_t,Y_t)}{X^{\widehat\pi}_t}=V_x(t,X^{\widehat\pi}_t,Y_t),$$
where we have used the homotheticity \eqref{eq:hom} of the value function in the last step. Hence, \eqref{eq:FOC} implies  
$$E\left[V_x(T,X^{\widehat\pi}_T,Y_T) \frac{S_T}{S_t} \Big| \scr{F}_t\right]= V_x(t,X^{\widehat\pi}_t,Y_t) \frac{S_t}{S_t},$$
so that $V_x(t,X^{\widehat\pi}_t,Y_t)S_t$ must be a martingale. Repeating these arguments, but now moving $\varepsilon$ into the safe account at time $t$, it follows analogously that $V_x(t,X^{\widehat\pi}_t,Y_t)S^0_t$ has to be a martingale, too. Hence, $V_x(t,X^{\widehat\pi}_t,Y_t)S^0_t$ should be the density process of an equivalent martingale measure after normalizing the initial value to $1$, and $V_x(t,X^{\widehat\pi}_t,Y_t)$ is the corresponding deflator that turns all wealth processes into supermartingales. Since $V_x(T,X^{\widehat\pi}_T,Y_T)=U'(X^{\widehat\pi}_T)$ by the terminal condition \eqref{eq:terminal1}, Lemma \ref{lem:dual} therefore suggests that the upper duality bound \eqref{eq:duality} is attained by the wealth derivative of the value function, evaluated along the wealth process of the candidate portfolio, and normalized to initial value $1$.

The corresponding candidate deflator in the long-run limit $T \to \infty$ is derived analogously from the candidate long-run value function \eqref{eq:vlr}: differentiate with respect to wealth and normalize the initial value to $1$, obtaining
\begin{equation}\label{eq:Zinftytemp}
\widehat{Z}^\infty_t:=(X^{\widehat{\pi}}_t/x)^{-\gamma}\exp(-A^\infty t +B^\infty (Y_t-Y_0)).
\end{equation}

\paragraph{Verification} With candidates for the optimal portfolio and the dual deflator at hand, we can now verify \emph{rigorously} that they attain the bound \eqref{eq:duality}. The first step is to show that the constant $B^\infty$ -- which determines the intertemporal hedging term $\frac{\rho \sigma_Y}{\gamma} B^\infty$ and, together with \eqref{eq:Ainfty}, the density~\eqref{eq:Zinftytemp} of the candidate for the dual deflator -- is well defined:

\begin{mylemma}\label{lemmaBinfity}
Suppose that $\gamma>1$, or $\gamma<1$ and
\begin{align}
D=b^2-4 a c &> 0,  \label{determinantcond}
\end{align}
where
\begin{equation}
c=\left(\frac{\gamma-1}{\gamma}\rho^2-1\right)\frac{\sigma_Y^2}{2}, \quad b=\frac{\gamma-1}{\gamma} \mu_S\rho\sigma_Y+\lambda_Y, \quad a=\frac{\gamma-1}{\gamma} \frac{\mu_S^2}{2}.
\end{equation}
Then, there exists a solution $B^\infty \in \mathbb{R}$ of the quadratic equation
\begin{eqnarray}
0 &=& c (B^\infty)^2+ b B^\infty + a \label{odeBinfinity}
\end{eqnarray}
satisfying $B^\infty <0$ for $\gamma>1$ resp.\ $B^\infty > 0$ for $\gamma  < 1$. Moreover:
\begin{eqnarray}\label{lambdaphatpos}
B^\infty <\frac{\lambda_Y- \frac{1-\gamma}{\gamma}\mu_S \rho \sigma_Y}{\sigma_Y^2(1+\frac{1-\gamma}{\gamma}\rho^2)}.
\end{eqnarray}
\end{mylemma}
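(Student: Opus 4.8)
The plan is to read \eqref{odeBinfinity} as the ordinary quadratic $f(B):=cB^2+bB+a=0$ and extract everything from the signs of $a,b,c$ together with the position of its roots. First I would pin down these signs. Since $\rho^2\le 1$ and $\frac{\gamma-1}{\gamma}<1$ for every admissible $\gamma$, one has $\frac{\gamma-1}{\gamma}\rho^2-1<0$, so that $c<0$ always and the parabola $f$ opens downward. The sign of $a=\frac{\gamma-1}{\gamma}\frac{\mu_S^2}{2}$ coincides with that of $\gamma-1$, so $a>0$ for $\gamma>1$ and $a<0$ for $\gamma<1$. Finally, because the Heston model imposes $\rho\in[-1,0]$ and $\lambda_Y>0$, the cross term $\frac{\gamma-1}{\gamma}\mu_S\rho\sigma_Y$ is a product of two factors of the same sign and hence nonnegative, so $b=\frac{\gamma-1}{\gamma}\mu_S\rho\sigma_Y+\lambda_Y>0$. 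I would also note that $D=b^2-4ac>0$ is automatic for $\gamma>1$, where $a>0$ and $c<0$ force $-4ac>0$, while for $\gamma<1$ it is precisely the standing hypothesis \eqref{determinantcond}; in both cases $f$ has two distinct real roots.

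Next I would take $B^\infty$ to be the smaller root $\frac{\sqrt{D}-b}{2c}$ and check its sign by evaluating $f$ at the origin. For $\gamma>1$ we have $f(0)=a>0$; since $f$ opens downward and is positive at $0$, the origin lies strictly between the two roots, so the smaller root is negative and $B^\infty<0$. For $\gamma<1$ we have $f(0)=a<0$, so $0$ is not between the roots, and since $f'(0)=b>0$ both roots must lie to the right of the origin, giving $B^\infty>0$. Equivalently, by Vieta the product of the roots is $a/c$, which is negative for $\gamma>1$ and positive for $\gamma<1$, and their sum $-b/c$ is positive.

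The bound \eqref{lambdaphatpos} is then the one step that requires a small observation: its right-hand side is exactly the vertex abscissa $-b/(2c)$ of $f$. Indeed, using $\frac{1-\gamma}{\gamma}=-\frac{\gamma-1}{\gamma}$, the numerator $\lambda_Y-\frac{1-\gamma}{\gamma}\mu_S\rho\sigma_Y$ equals $\lambda_Y+\frac{\gamma-1}{\gamma}\mu_S\rho\sigma_Y=b$, while the denominator $\sigma_Y^2\bigl(1+\frac{1-\gamma}{\gamma}\rho^2\bigr)=\sigma_Y^2\bigl(1-\frac{\gamma-1}{\gamma}\rho^2\bigr)=-2c$. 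Hence \eqref{lambdaphatpos} reads simply $B^\infty<-\frac{b}{2c}$. Writing $B^\infty=\frac{\sqrt{D}-b}{2c}=-\frac{b}{2c}+\frac{\sqrt{D}}{2c}$ and using $D>0$, $c<0$ so that $\frac{\sqrt{D}}{2c}<0$, the smaller root lies strictly below the vertex and the bound follows at once.

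Overall this is routine quadratic bookkeeping; the only genuinely non-obvious step --- the main obstacle, such as it is --- is recognizing that the right-hand side of \eqref{lambdaphatpos} simplifies to the vertex $-b/(2c)$, after which the inequality is immediate. A secondary point of care is the sign analysis for $\gamma<1$, which relies on the model's structural constraints $\rho\le0$ and $\lambda_Y>0$ to guarantee $b>0$ and hence that both roots are positive.
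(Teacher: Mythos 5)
Your proof is correct and follows essentially the same route as the paper's: the same choice of $B^\infty=\frac{\sqrt{D}-b}{2c}$ as the smaller root, the same sign analysis of the downward-opening parabola at the origin ($f(0)=a>0$ for $\gamma>1$; $f(0)=a<0$ together with $f'(0)=b>0$ for $\gamma<1$), and the same identification of the right-hand side of \eqref{lambdaphatpos} with the vertex $-b/(2c)$, from which $B^\infty<-b/(2c)$ follows since $\sqrt{D}/(2c)<0$. One side remark is inaccurate, though harmless: your claim that the cross term $\frac{\gamma-1}{\gamma}\mu_S\rho\sigma_Y$ is always nonnegative (hence $b>0$ for every $\gamma$) is true only for $\gamma<1$; for $\gamma>1$ one has $\frac{\gamma-1}{\gamma}>0$ and $\rho\le 0$, so the cross term is nonpositive and $b$ can be negative, and likewise your Vieta aside that the sum of roots $-b/c$ is positive can fail for $\gamma>1$. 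Since you invoke $b>0$ only in the $\gamma<1$ branch, exactly as the paper does, the argument itself is unaffected.
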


\begin{proof}
Evidently,
\begin{equation}
B^\infty= \frac{\sqrt{b^2-4ac}-b}{2c}
\end{equation}
solves the quadratic equation \eqref{odeBinfinity}.\footnote{Note that $D$ is always strictly positive for $\gamma>1$, so that the parameter restriction \eqref{determinantcond} is not needed in this case.} Let $\gamma > 1$ and define
\begin{eqnarray*}
f(x) &=& c x^2+ b x + a.
\end{eqnarray*} 
The parabola $f(x)$ opens downwards, and satisfies $f(0)= a > 0$. Taking into account $c<0$, it therefore follows that the smaller root $B^\infty$ of $f(x)$ is negative. For $\gamma <1$, $\rho<0$ implies $f'(0)=b>0$, so that $B^\infty > 0$ follows from $f(0) < 0$. Finally, \eqref{determinantcond} yields
\begin{eqnarray*}
 B^\infty &<& -\frac{b}{2c} =  \frac{\lambda_Y-\frac{1-\gamma}{\gamma}\mu_S \sigma_Y \rho}{\sigma_Y^2\left(1+\frac{1-\gamma}{\gamma}\rho^2\right)},
\end{eqnarray*}
which proves the last part of the assertion.
\end{proof}

As motivated by the heuristics \eqref{eq:Ainfty}, set 
\begin{equation}\label{eq:Ainfty2}
A^\infty = (1-\gamma)r + \lambda_Y \bar Y B^\infty.
\end{equation}

We now verify that the candidate \eqref{eq:Zinftytemp} indeed meets the requirements of Lemma~\ref{lem:dual}:

\begin{mylemma}\label{probabilitymeasureBinfty}
For $B^\infty$ as in Lemma~\ref{lemmaBinfity} and $A^\infty$ as in \eqref{eq:Ainfty2}, in accordance with \eqref{eq:Zinftytemp} define 
$$\widehat{Z}^\infty_t=(X_t^{\widehat{\pi}^\infty}/x)^{-\gamma}\exp(-A^\infty t +B^\infty (Y_t-Y_0)),$$
where $X_t^{\widehat{\pi}^\infty}$ denotes the wealth process of the portfolio $\widehat{\pi}^\infty=(\mu_S+\rho \sigma_Y B^\infty)/\gamma$. Then, 
\begin{equation}\label{eq:Zinfty}
\widehat{Z}^\infty_t = e^{-rt}\scr{E}\left(\int_0^\cdot \sigma_Y \sqrt{Y_s} B^{\infty} d W_s^Y- \int_0^\cdot \sqrt{Y_s}(\mu_S+ \rho \sigma_Y B^
{\infty}) dW_s \right)_t,
\end{equation}
and the process $\widehat{Z}^\infty_t$ satisfies the conditions of Lemma~\ref{lem:dual}, i.e.,
\begin{equation}\label{eq:hestonbound}
E[X^{\pi}_T \widehat{Z}^\infty_T] \leq x,
\end{equation}
where $X^{\pi}_t$ denotes the wealth process of an arbitrary portfolio $\pi_t$.
\end{mylemma}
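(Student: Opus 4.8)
The plan is to establish the lemma in two stages: first derive the stochastic-exponential representation \eqref{eq:Zinfty}, and then read off the bound \eqref{eq:hestonbound} directly from it.

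For the representation, I would expand $\widehat{Z}^\infty_t$ into a single exponential by inserting the explicit forms of its three factors. By \eqref{eq:stoexp}, the constant portfolio $\widehat\pi^\infty=(\mu_S+\rho\sigma_Y B^\infty)/\gamma$ gives
\[
(X^{\widehat\pi^\infty}_t/x)^{-\gamma}=\exp\!\left(-\gamma\!\int_0^t\!\Big(r+\widehat\pi^\infty\mu_S Y_u-\tfrac12(\widehat\pi^\infty)^2Y_u\Big)du-\gamma\!\int_0^t\!\widehat\pi^\infty\sqrt{Y_u}\,dW_u\right),
\]
and integrating \eqref{eq:heston2} yields $B^\infty(Y_t-Y_0)=B^\infty\int_0^t\lambda_Y(\bar Y-Y_u)\,du+B^\infty\int_0^t\sigma_Y\sqrt{Y_u}\,dW^Y_u$. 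Collecting terms, the $dW_u$-integrand in the exponent of $\widehat{Z}^\infty_t$ is $-\gamma\widehat\pi^\infty\sqrt{Y_u}=-(\mu_S+\rho\sigma_Y B^\infty)\sqrt{Y_u}$ and the $dW^Y_u$-integrand is $\sigma_Y B^\infty\sqrt{Y_u}$; these already coincide with the two stochastic integrals defining the exponent $N_t$ of the claimed $\scr{E}(N)_t$. It therefore remains only to match the finite-variation part against $-rt-\tfrac12\langle N\rangle_t$, the compensator appearing in $e^{-rt}\scr{E}(N)_t$.

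This drift identity splits into a part constant in $Y$ and a part proportional to $Y_u$. The constant part collapses to $-r$ precisely by the definition $A^\infty=(1-\gamma)r+\lambda_Y\bar Y B^\infty$ in \eqref{eq:Ainfty2}. For the $Y_u$-part, computing $d\langle N\rangle_t/dt$ with $d\langle W,W^Y\rangle_t=\rho\,dt$ produces the coefficient $\mu_S^2+\sigma_Y^2(B^\infty)^2(1-\rho^2)$ times $Y_u$; equating this with the remaining drift of $\widehat{Z}^\infty_t$ and sorting by powers of $B^\infty$ reduces the identity, after inserting the definitions of $a,b,c$, to exactly the quadratic equation $c(B^\infty)^2+bB^\infty+a=0$ of \eqref{odeBinfinity}. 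Verifying this cancellation is the main computational obstacle — it is the one place where the specific value of $B^\infty$ from Lemma~\ref{lemmaBinfity} enters — but it is entirely mechanical.

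With \eqref{eq:Zinfty} in hand, the bound is almost immediate and, tellingly, requires no integrability hypotheses. Writing $\widehat{Z}^\infty_t=e^{-rt}\scr{E}(N)_t$, I would apply It\^o's product rule to $X^\pi_t\widehat{Z}^\infty_t=(e^{-rt}X^\pi_t)\scr{E}(N)_t$ for an arbitrary portfolio $\pi_t$. The discounted wealth $e^{-rt}X^\pi_t$ has drift $e^{-rt}X^\pi_t\pi_t\mu_S Y_t$ and diffusion driven by $\pi_t\sqrt{Y_t}\,dW_t$, while $\scr{E}(N)_t$ is a driftless local martingale. The cross-variation of the $dW_t$-part of the wealth with the $dW_t$- and $dW^Y_t$-parts of $N$, again using $d\langle W,W^Y\rangle_t=\rho\,dt$, contributes exactly $-e^{-rt}X^\pi_t\scr{E}(N)_t\pi_t\mu_S Y_t$, cancelling the wealth drift for \emph{every} $\pi_t$. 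This $\pi$-independent cancellation is the hallmark of a deflator; here it follows directly from the structure of $N$ and uses neither the quadratic equation nor any martingale property of $\scr{E}(N)$. Hence $X^\pi_t\widehat{Z}^\infty_t$ is a nonnegative local martingale, so by Fatou's lemma it is a supermartingale. Since $X^\pi_0\widehat{Z}^\infty_0=x\cdot1=x$, we conclude $E[X^\pi_T\widehat{Z}^\infty_T]\le x$, which is \eqref{eq:hestonbound}.
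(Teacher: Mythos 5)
Your proposal is correct and follows essentially the same route as the paper's proof: both establish \eqref{eq:Zinfty} by expanding $\widehat{Z}^\infty_t$ explicitly, with the constant-in-$Y$ drift terms cancelling via the definition \eqref{eq:Ainfty2} of $A^\infty$ and the $Y$-proportional terms reducing exactly to the quadratic equation \eqref{odeBinfinity}, and both deduce \eqref{eq:hestonbound} from the fact that $X^\pi_t \widehat{Z}^\infty_t$ is a positive local martingale, hence a supermartingale by Fatou's lemma, for every portfolio $\pi_t$. The only difference is bookkeeping: the paper organizes the computations through Yor's formula \eqref{eq:Yor} for products of stochastic exponentials, whereas you match the finite-variation part against $-rt-\tfrac{1}{2}\langle N\rangle_t$ directly and obtain the $\pi$-independent drift cancellation via It\^o's product rule --- the same cancellations in different notation.
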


\begin{proof}
To see this, compute the dynamics of $X^{\pi}_t Z^\infty_t$ using the definition of the stochastic exponential,
\begin{equation}\label{eq:DD}
\scr{E}(N)_t=\exp\left(N_t-N_0-\frac{1}{2}\langle N,N\rangle_t\right),
\end{equation}
as well as Yor's formula,
\begin{equation}\label{eq:Yor}
\scr{E}(N)_t\scr{E}(N')_t=\scr{E}(N+N'+\langle N,N'\rangle)_t,
\end{equation}
for continuous semimartingales $N_t, N'_t$. The drift rate of $X^{\pi}_t Z^\infty_t$ turns out to be zero for any portfolio $\pi_t$, so that this process is a nonnegative local martingale and in turn a supermartingale as claimed. For the convenience of the reader, we provide some details of the somewhat lengthy but straightforward calculations needed to check this. By \eqref{eq:stoexp} and \eqref{eq:DD}, we have
\begin{eqnarray*}
(X_t^{\widehat{\pi}_\infty})^{-\gamma} &=& x^{-\gamma}\exp{\left(\int_0^t -\gamma r - \mu_S Y_s (\mu_S+ \rho \sigma_Y B^{\infty}) ds+ \int_0^t \frac{1+\gamma}{2\gamma} Y_s (\mu_S+ \rho \sigma_Y B^{\infty})^2  ds\right)}\\
 & &\times \scr{E}\left(\int_0^\cdot - \sqrt{Y_s}(\mu_S+ \rho \sigma_Y B^{\infty}) d W_s\right)_t.
\end{eqnarray*}
Likewise, \eqref{eq:DD} and the dynamics \eqref{eq:heston2} of the process $Y_t$ yield
\begin{eqnarray*}
\exp{\left(-A^\infty t + B^\infty Y_t\right)} &=& \exp{\left(\int_0^t -A^\infty ds + B^{\infty} Y_0 +\int_0^t B^\infty \lambda_Y (\bar{Y}- Y_s) ds + \int_0^t \frac{1}{2} (B^\infty)^2 \sigma_Y^2 Y_s ds \right)}\\
 & & \times \scr{E}\left(\int_0^\cdot \sigma_Y B^\infty \sqrt{Y_s} dW_s^Y\right)_t.
\end{eqnarray*}
Hence, Yor's formula \eqref{eq:Yor} shows
\begin{eqnarray*}
(X_t^{\widehat{\pi}^\infty})^{-\gamma} \exp{\left(-A^\infty t +B^\infty Y_t\right)}
 &=& x^{-\gamma} \exp{\left(B^\infty Y_0\right)} \exp{\left(\int_0^t \left[-A^\infty -\gamma r + B^\infty \lambda_Y (\bar{Y}- Y_s) \right] ds\right)} \\
  & & \times \exp{\left(\int_0^t \left[\frac{1}{2}(B^\infty)^2 \sigma_Y^2 Y_s+\frac{1-\gamma}{2 \gamma} Y_s (\mu_S+ \rho \sigma_Y B^\infty)^2\right] ds\right)}\\
 & & \times \scr{E}\left(\int_0^\cdot \sigma_Y \sqrt{Y_s} B^\infty d W_s^Y- \int_0^\cdot \sqrt{Y_s}(\mu_S+ \rho \sigma_Y B^\infty) dW_s \right)_t\\
 &=& x^{-\gamma} \exp{\left(-rt+B^\infty Y_0\right)} \\
 & & \times \scr{E}\left(\int_0^\cdot \sigma_Y \sqrt{Y_s} B^\infty d W_s^Y- \int_0^\cdot \sqrt{Y_s}(\mu_S+ \rho \sigma_Y B^\infty) dW_s \right)_t,
\end{eqnarray*}
where we have used the quadratic equation~\eqref{odeBinfinity} for $B^\infty$ in the last step. This proves the first part \eqref{eq:Zinfty} of the assertion. Now, let $\pi_t$ be any portfolio. Then, \eqref{eq:stoexp}, \eqref{eq:Zinfty} and Yor's formula \eqref{eq:Yor} give
\begin{eqnarray*}
E\left[X^{\pi}_T \widehat{Z}^\infty_T\right] &=& E \Biggl[x\scr{E}\left(\int_0^\cdot \pi_t \mu_S Y_t dt + \pi_t \sqrt{Y_t} dW_t\right)_T\\
 & & \qquad \times \scr{E}\left(\int_0^\cdot \sigma_Y \sqrt{Y_t} B^{\infty} d W_t^Y- \int_0^\cdot \sqrt{Y_t}(\mu_S+ \rho \sigma_Y B^{\infty}) dW_t \right)_T\Biggr]\\
  &=& E \left[x\scr{E}\left(\int_0^\cdot \sigma_Y \sqrt{Y_t} B^{\infty} d W_s^Y+ \int_0^\cdot \sqrt{Y_t} \left[\pi_t-(\mu_S+ \rho \sigma_Y B^{\infty})\right]d W_t\right)_T\right] \\
  &\leq & x,
\end{eqnarray*}
because a positive local martingale is a supermartingale by Fatou's Lemma.
\end{proof}

To establish the long-run optimality of the portfolio $\widehat{\pi}^\infty$, we now compute upper and lower finite horizon bounds. Here, the candidate portfolio itself leads to the lower bound, whereas the upper bound is derived from the duality bound \eqref{eq:duality}, applied to the supermartingale deflator $\widehat{Z}^\infty_t$ from Lemma \ref{probabilitymeasureBinfty}. In each case, the calculations only make use of the HJB equation, and in turn extend to general Markovian settings, see \cite[Theorem 7]{guasoni.robertson.12} for more details. As observed by Guasoni and Robertson \cite{guasoni.robertson.12}, both bounds are most conveniently expressed in terms of a particular measure $\widehat{\mathbb{P}}$ (locally) equivalent to the physical probability $\mathbb{P}$:\footnote{This measure can be interpreted as the \emph{myopic probability} $\widehat{\mathbb{P}}$, under which a hypothetical investor with logarithmic utility would choose the same long-run optimal portfolio as the original power investor under the physical probability $\mathbb{P}$; see \cite[Section 3.1]{guasoni.robertson.12} for more details. The density process \eqref{eq:myopic} is readily determined by setting the log-optimal portfolio under this measure -- i.e., the corresponding expected excess return determined by Girsanov's theorem, divided by the squared volatility (cf., e.g., \cite{goll.kallsen.00}) -- equal to the candidate long-run portfolio $\widehat{\pi}^\infty$ under the original probability. 

}

\begin{mylemma}\label{lemmafinitehorizon}
Fix a horizon $T>0$ and let $B^\infty$ and $A^\infty$ be defined as in Lemma~\ref{lemmaBinfity} and~\eqref{eq:Ainfty2}, respectively. Then, the wealth process $X_t^{\widehat{\pi}^\infty}$ corresponding to the portfolio $\pi^{\infty} = (\mu_S+ \rho \sigma_Y B^\infty)/\gamma$ and the deflator $\widehat{Z}^\infty_t$ defined in Lemma~\ref{probabilitymeasureBinfty} satisfy the following finite-horizon bounds:
\begin{eqnarray}
E\left[(X^{\widehat{\pi}^\infty}_T)^{1-\gamma}\right]&=& x^{1-\gamma} e^{A^\infty T} E^{\widehat{\mathbb{P}}}\left[e^{B^\infty(Y_0-Y_T)}\right],\label{bound1infty}\\
E \left[(\widehat{Z}^\infty_T)^{1-\frac{1}{\gamma}}\right]^{\gamma} &=& e^{A^\infty T} E^{\widehat{\mathbb{P}}}\left[e^{\frac{B^\infty(Y_0-Y_T)}{\gamma}}\right]^{\gamma},\label{bound2infty}
\end{eqnarray}
where the probability measure $\widehat{\mathbb{P}}$ is defined by, 
\begin{eqnarray}\label{eq:myopic}
\frac{d \widehat{\mathbb{P}}|_{\scr{F}_T}}{d \mathbb{P}|_{\scr{F}_T}} = \scr{E}\left(\int_0^\cdot \sqrt{Y_s} \sigma_Y B^\infty d W_s^Y+ \int_0^\cdot \left(\frac{1}{\gamma}-1\right) \sqrt{Y_s} (\mu_S+ \rho \sigma_Y B^\infty) d W_s\right)_T.
\end{eqnarray}
\end{mylemma}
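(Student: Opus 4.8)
The plan is to reduce both equalities \eqref{bound1infty} and \eqref{bound2infty} to a single \emph{pathwise} representation, writing the relevant quantities as the density $\scr{E}(L)_T = d\widehat{\mathbb{P}}/d\mathbb{P}|_{\scr{F}_T}$ times explicit deterministic and $Y_T$-dependent factors, and then passing to $\widehat{\mathbb{P}}$ by a change of measure. Throughout, I would set $\kappa := \mu_S + \rho\sigma_Y B^\infty$, so that $\widehat{\pi}^\infty = \kappa/\gamma$, and let $L := \int_0^\cdot \sqrt{Y_s}\,\sigma_Y B^\infty\, dW^Y_s + \int_0^\cdot (\tfrac1\gamma-1)\sqrt{Y_s}\,\kappa\, dW_s$ denote the exponent in the density \eqref{eq:myopic}, so that $d\widehat{\mathbb{P}}/d\mathbb{P}|_{\scr{F}_T}=\scr{E}(L)_T$.

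First I would record the two pathwise identities
\[
(X^{\widehat{\pi}^\infty}_T)^{1-\gamma} = x^{1-\gamma} e^{A^\infty T + B^\infty(Y_0-Y_T)}\,\scr{E}(L)_T, \qquad (\widehat{Z}^\infty_T)^{1-1/\gamma} = e^{\frac{A^\infty}{\gamma}T + \frac{B^\infty}{\gamma}(Y_0-Y_T)}\,\scr{E}(L)_T.
\]
Both follow cheaply from Lemma~\ref{probabilitymeasureBinfty}. Indeed, specializing the product computation in its proof to $\pi=\widehat{\pi}^\infty$ — where $\pi-\kappa=(\tfrac1\gamma-1)\kappa$ — gives exactly $X^{\widehat{\pi}^\infty}_T\widehat{Z}^\infty_T = x\,\scr{E}(L)_T$. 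On the other hand, the defining relation $\widehat{Z}^\infty_T = (X^{\widehat{\pi}^\infty}_T/x)^{-\gamma}e^{-A^\infty T + B^\infty(Y_T-Y_0)}$ rearranges to $(X^{\widehat{\pi}^\infty}_T/x)^{-\gamma} = \widehat{Z}^\infty_T\, e^{A^\infty T + B^\infty(Y_0-Y_T)}$; multiplying by $X^{\widehat{\pi}^\infty}_T/x$ and inserting $X^{\widehat{\pi}^\infty}_T\widehat{Z}^\infty_T = x\,\scr{E}(L)_T$ yields the first identity, while raising the rearranged relation to the power $1-1/\gamma$ and using $-\gamma(1-1/\gamma)=1-\gamma$ together with the first identity yields the second. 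Alternatively one can verify both directly from \eqref{eq:stoexp}, \eqref{eq:DD}, \eqref{eq:Yor} and the dynamics \eqref{eq:heston2}: the $dW^Y$-integrals cancel against $B^\infty(Y_0-Y_T)$, the $dW$-integral matches the $W$-part of $L$, the $\int_0^T ds$ drift matches through \eqref{eq:Ainfty2}, and the $\int_0^T Y_s\,ds$ drift matches precisely by the quadratic equation \eqref{odeBinfinity}.

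Given these identities, I would finish by taking $\mathbb{P}$-expectations. Assuming $\scr{E}(L)$ is a true $\mathbb{P}$-martingale, so that \eqref{eq:myopic} defines a probability measure $\widehat{\mathbb{P}}\sim\mathbb{P}$ on $\scr{F}_T$, the change-of-measure identity $E[G\,\scr{E}(L)_T]=E^{\widehat{\mathbb{P}}}[G]$ applied with $G=e^{B^\infty(Y_0-Y_T)}$ and with $G=e^{B^\infty(Y_0-Y_T)/\gamma}$ gives \eqref{bound1infty} directly, and gives $E[(\widehat{Z}^\infty_T)^{1-1/\gamma}] = e^{A^\infty T/\gamma}E^{\widehat{\mathbb{P}}}[e^{B^\infty(Y_0-Y_T)/\gamma}]$, the latter yielding \eqref{bound2infty} after raising to the power $\gamma$.

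The one genuinely delicate point — and the main obstacle — is the justification that $\scr{E}(L)$ is a true martingale, equivalently that $\widehat{\mathbb{P}}$ really is a probability measure rather than a strict sub-probability; a mere local-martingale or supermartingale bound would turn the asserted equalities into inequalities. Here the plan is to exploit the affine structure. By Girsanov, under $\widehat{\mathbb{P}}$ the state variable $Y$ remains a square-root diffusion, with unchanged product $\lambda_Y\bar{Y}$ but modified mean-reversion speed
\[
\widehat{\lambda}_Y = \lambda_Y - \tfrac{1-\gamma}{\gamma}\mu_S\rho\sigma_Y - \sigma_Y^2 B^\infty\Big(1+\tfrac{1-\gamma}{\gamma}\rho^2\Big),
\]
which is strictly positive \emph{precisely} by the bound \eqref{lambdaphatpos} of Lemma~\ref{lemmaBinfity}. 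Hence $Y$ does not explode under $\widehat{\mathbb{P}}$, and the required finiteness of the relevant exponential moments of the resulting CIR process then guarantees $E[\scr{E}(L)_T]=1$; this can be settled using the standard theory of affine diffusions, as in \cite{guasoni.robertson.12}.
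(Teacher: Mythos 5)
Your proof is correct, and it departs from the paper's in two instructive ways. First, for the pathwise identities: the paper derives $(X^{\widehat{\pi}^\infty}_T)^{1-\gamma} = x^{1-\gamma}\,\frac{d \widehat{\mathbb{P}}|_{\scr{F}_T}}{d \mathbb{P}|_{\scr{F}_T}}\,e^{B^\infty(Y_0-Y_T)+A^\infty T}$ and its analogue for $(\widehat{Z}^\infty_T)^{1-1/\gamma}$ by a direct, fairly lengthy computation -- expanding the stochastic exponentials, substituting the $dW^Y$-integral via the dynamics \eqref{eq:heston2}, and invoking Yor's formula together with the quadratic equation \eqref{odeBinfinity}. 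You obtain the same two identities almost for free by observing that the product computation inside the proof of Lemma~\ref{probabilitymeasureBinfty}, specialized to $\pi=\widehat{\pi}^\infty$ (so that $\pi-\kappa=(\tfrac{1}{\gamma}-1)\kappa$), yields exactly $X^{\widehat{\pi}^\infty}_T\widehat{Z}^\infty_T = x\,\scr{E}(L)_T$ with $\scr{E}(L)_T$ the density \eqref{eq:myopic}, and then rearranging the defining relation of $\widehat{Z}^\infty_T$; the exponent bookkeeping ($-\gamma(1-1/\gamma)=1-\gamma$) checks out. This is genuinely cleaner and makes transparent why the \emph{same} measure $\widehat{\mathbb{P}}$ appears in both bounds. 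Second, for the true-martingale property of $\scr{E}(L)$ -- which you rightly single out as the only delicate point, since a strict supermartingale would degrade the asserted equalities to inequalities -- the paper applies Novikov's condition successively on a sufficiently fine partition (\cite[Corollary 3.5.14]{karatzas.shreve.91}), with the exponential moment bound supplied by the explicit noncentral chi-squared law of the square-root process (Lemma~\ref{expboundnovikov}). You instead take the non-explosion route: under the candidate measure, $Y$ remains a square-root diffusion with unchanged $\lambda_Y\bar{Y}$ and modified speed $\widehat{\lambda}_Y$ (your formula agrees with the $\lambda^{\widehat{\mathbb{P}},Y}$ computed in the paper's proof of Theorem~\ref{lemmaoptimalityinf}), whence $E[\scr{E}(L)_T]=1$ by martingale criteria for affine diffusions. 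This is a legitimate alternative -- the paper itself flags it in a footnote, citing \cite{cheridito.al.05,kallsen.muhlekarbe.10b} -- and it scales better to models without explicit transition laws, while the paper's argument is more elementary and self-contained. Two small imprecisions in your final step are worth noting, though neither invalidates the argument: (i) the positivity of $\widehat{\lambda}_Y$ via \eqref{lambdaphatpos} is not what prevents explosion -- a square-root diffusion with affine drift never explodes in finite time regardless of the sign of the mean-reversion speed, so \eqref{lambdaphatpos} is only needed later, for the ergodic argument in Theorem~\ref{lemmaoptimalityinf}; (ii) the phrase ``finiteness of the relevant exponential moments of the resulting CIR process'' conflates the two routes -- the non-explosion criterion needs no exponential moments at all, whereas Novikov needs them under $\mathbb{P}$ (where Lemma~\ref{expboundnovikov} provides them), not under $\widehat{\mathbb{P}}$.
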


\begin{proof}
To ease notation, set $C^{\infty}= \rho \sigma_Y B^{\infty}$. Let us first verify that the stochastic exponential on the right-hand side of \eqref{eq:myopic} is a true martingale and therefore indeed the density process of $\widehat{\mathbb{P}}$ with respect to $\mathbb{P}$. Since the variance process $Y_t$ does not have finite exponential moments of all orders (cf.\ the proof of Lemma \ref{expboundnovikov} below), Novikov's condition is not directly applicable in its usual form \cite[Corollary 3.5.13]{karatzas.shreve.91}. However, one can apply a version \cite[Corollary 3.5.14]{karatzas.shreve.91} where it is used successively on a sufficiently fine partition of any given interval.\footnote{Instead, one can also turn to results in more general settings warranting the true martingale property of local martingales, cf., e.g., \cite{cheridito.al.05,kallsen.muhlekarbe.10b} and the references therein.} Indeed, L\'evy's theorem implies that
\begin{equation}
W_s^* := \frac{1}{L}(\sigma_Y B^\infty W_s^Y+(1/\gamma-1)(\mu_S+ \rho \sigma_Y B^\infty) W_s)
\end{equation}
with 
\begin{equation*}
L:= \sqrt{\sigma_Y^2 (B^\infty)^2+ 2 C^\infty (1/\gamma-1)(\mu_S+ C^\infty)+ (1/\gamma-1)^2(\mu_S+ C^\infty)^2}
\end{equation*}
is a $\mathbb{P}$-Brownian motion. Now, consider an equidistant partition $0 = t_0< t_1< \cdots < t_N = T$ of $[0,T]$. Lemma~\ref{expboundnovikov} below shows that for $N$ sufficiently large, i.e., $t_i-t_{i-1}$ sufficiently small,
\begin{equation*}
E\left[\exp{\left(\frac{L^2 }{2} \int_{t_k}^{t_{k+1}}  Y_s ds\right)}\right]< \infty, \quad \forall k\geq 0.
\end{equation*}
This in turn implies that $\scr{E}\left(\int_0^\cdot L \sqrt{Y_s} dW_s^*\right)$ is indeed a true martingale by Novikov's Condition as in \cite[Corollary 3.5.14]{karatzas.shreve.91}, and hence the density process of $\widehat{\mathbb{P}}$ with respect to $\mathbb{P}$.
 
Let us now turn to the first finite-horizon bound \eqref{bound1infty}. By \eqref{eq:stoexp} and \eqref{eq:DD}, the wealth process $X_t^{\pi^{\infty}}$ satisfies:
\begin{eqnarray*}
(X^{\widehat{\pi}^\infty}_T)^{1-\gamma}&=& x^{1-\gamma} \exp{\left(\int_0^T\left[(1-\gamma) r+ \frac{1-\gamma}{\gamma} \mu_S Y_t (\mu_S + C^\infty)-\frac{1-\gamma}{2 \gamma^2} Y_t (\mu_S+ C^\infty)^2\right] dt\right)}\\
 & & \times \exp{\left(\int_0^T \frac{1-\gamma}{\gamma} \sqrt{Y_t}(\mu_S + C^\infty) d W_t\right)}.
\end{eqnarray*}
Hence,
\begin{eqnarray}
(X^{\widehat{\pi}^\infty}_T)^{1-\gamma} &=& x^{1-\gamma} e^{rT}Z^\infty_T \exp{\left(\int_0^T \frac{1}{\gamma} \sqrt{Y_t} (\mu_S+ C^\infty) d W_t - \int_0^T \sigma_Y \sqrt{Y_t} B^\infty d W_t^Y\right)}\label{eq:zfirstinfty}\\
 & & \times \exp{\left(\int_0^T \left[(1-\gamma)r + \frac{1}{2} Y_t (B^\infty)^2 \sigma_Y^2 - Y_t C^\infty (\mu_S+ C^\infty) + \frac{1}{2} Y_t (\mu_S + C^\infty)^2 \right] dt \right)}\nonumber\\
  & & \times \exp{\left(\int_0^T \left[\left(\frac{1}{\gamma}-1\right) \mu_S Y_t (\mu_S+ C^\infty)-\frac{1}{2}\left(\frac{1}{\gamma^2}-\frac{1}{\gamma}\right)Y_t (\mu_S+ C^\infty)^2 \right] dt\right)}.\nonumber
\end{eqnarray}
Now, substitute the stochastic integral with respect to $W^Y_t$ in~\eqref{eq:zfirstinfty} using the dynamics \eqref{eq:heston2} of the variance process $Y_t$ and simplify, obtaining 
\begin{eqnarray*}
(X^{\widehat{\pi}^\infty}_T)^{1-\gamma} &=& x^{1-\gamma} e^{rT}Z^\infty_T \exp{\left(\int_0^T \frac{1}{\gamma} \sqrt{Y_t} (\mu_S+ C^\infty)d W_t-\frac{1}{2}\int_0^T \frac{Y_t (\mu_S+ C^\infty)^2}{\gamma^2} dt \right)}\nonumber\\
 & &\times \exp{\left(B^\infty(Y_0-Y_t) + \int_0^T  \left[B^\infty \lambda_Y(\bar{Y}- Y_t)+ \frac{1}{2} (B^\infty)^2 \sigma_Y^2 Y_t + \frac{1-\gamma}{2\gamma}Y_t (\mu_S+ C^\infty)^2\right] dt\right)}\nonumber\\
 & & \times \exp{\left(\int_0^T \left[\frac{\mu_S}{\gamma} Y_t (\mu_S+ C^\infty) + (1-\gamma) r\right]dt\right)}.\nonumber
\end{eqnarray*}
Yor's Formula \eqref{eq:Yor} in turn shows 
\begin{eqnarray*}
\frac{d \widehat{\mathbb{P}}|_{\scr{F}_T}}{d \mathbb{P}|_{\scr{F}_T}} &=& e^{rT} Z^\infty_T \scr{E} \left(\int_0^\cdot \frac{1}{\gamma} \sqrt{Y_t} (\mu_S + C^\infty) d W_t\right) \exp{\left(\int_0^\cdot \frac{1}{\gamma} Y_t \left[(\mu_S + C^\infty)^2- C^\infty (\mu_S + C^\infty) \right] dt\right)}.\\
\end{eqnarray*}
Using the quadratic equation~\eqref{odeBinfinity} and switching to the measure $\widehat{\mathbb{P}}$ we have
\begin{eqnarray}
(X^{\widehat{\pi}^\infty}_T)^{1-\gamma} &=& x^{1-\gamma} \frac{d \widehat{\mathbb{P}}|_{\scr{F}_T}}{d \mathbb{P}|_{\scr{F}_T}} \exp{\left(B^\infty (Y_0-Y_T)+ A^\infty T\right)}.
\end{eqnarray}
The first bound now follows by taking expectations on both sides. The argument for the second bound is similar. By definition,
\begin{eqnarray*}
(\widehat{Z}^\infty_T)^{1-\frac{1}{\gamma}} &=& e^{-rT} e^{rT} \widehat{Z}^\infty_T \exp{\left(\frac{r}{\gamma}T+\int_0^T -\frac{1}{\gamma} \sqrt{Y_t} \sigma_Y B^\infty dW_t^Y + \int_0^T \frac{1}{\gamma}\sqrt{Y_t}(\mu_S+ C^\infty) d W_t\right)}\\
   & & \times \exp{\left(\int_0^T \frac{1}{2 \gamma} \sigma_Y^2 (B^\infty)^2 Y_t dt+ \int_0^T \frac{1}{2 \gamma}(\mu_S + C^\infty)^2 Y_t dt-\int_0^T \frac{1}{\gamma} C^\infty (\mu_S+C^\infty) Y_t dt\right)}.
\end{eqnarray*}
Again replacing the stochastic integral with respect to $W^Y_t$ with \eqref{eq:heston2} and switching to the measure $\widehat{\mathbb{P}}$, we have
\begin{eqnarray*}
(\widehat{Z}^\infty_T)^{1-\frac{1}{\gamma}} &=&  \frac{d \widehat{\mathbb{P}}}{d \mathbb{P}} \exp{\left(\frac{B^\infty(Y_0-Y_T)}{\gamma}+ \frac{A^\infty T}{\gamma}\right)}.
\end{eqnarray*}
The second bound now follows by taking expectations on both sides, raising to the power $\gamma$, and inserting the definition \eqref{eq:Ainfty2} of $A_\infty$.
\end{proof}

The following estimate for the square-root process $Y_t$ was used in the above proof to infer the martingale property of the stochastic exponential in \eqref{eq:myopic}:
 
\begin{mylemma}\label{expboundnovikov}
Let $\alpha$ and $t_2 > 0$. Then for all $0\leq t_1 < t_2$ with $t_2-t_1$ small enough,
\begin{equation*}
E\left[\exp{\left(\alpha \int_{t_1}^{t_2} Y_s ds\right)}\right] < \infty.
\end{equation*}
\end{mylemma}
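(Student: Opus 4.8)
The plan is to exploit the fact that $Y_s$ is a CIR (Feller square-root) process, for which the conditional moment generating function $E[\exp(\alpha \int_{t_1}^{t_2} Y_s\,ds)]$ is known in closed form up to the solution of a Riccati ODE. The key observation is that the relevant exponential moment is finite precisely when the associated Riccati equation does not blow up on $[t_1,t_2]$, and this can always be guaranteed by taking the interval short enough. First I would reduce to the case $t_1=0$ and condition on $Y_{t_1}=y_0$: by the Markov property and the tower rule, it suffices to bound $E[\exp(\alpha\int_0^{h} Y_s\,ds)\mid Y_0=y_0]$ uniformly, where $h=t_2-t_1$.

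The main computation is the affine/exponential-affine structure of the CIR process. I would posit that, along the dynamics \eqref{eq:heston2}, the quantity $E[\exp(\alpha\int_0^{h} Y_s\,ds)\mid Y_0=y_0] = \exp(\phi(h)+\psi(h)y_0)$ for smooth functions $\phi,\psi$ with $\phi(0)=\psi(0)=0$. Plugging this ansatz into the Feynman-Kac PDE (equivalently, using It\^o on $\exp(\alpha\int_0^t Y_s\,ds + \phi(h-t)+\psi(h-t)Y_t)$ and requiring the drift to vanish) and separating the terms proportional to and independent of $y_0$ yields the Riccati system
\begin{align*}
\psi'(h) &= \tfrac{1}{2}\sigma_Y^2 \psi(h)^2 - \lambda_Y \psi(h) + \alpha, \quad \psi(0)=0,\\
\phi'(h) &= \lambda_Y \bar{Y}\,\psi(h), \quad \phi(0)=0.
\end{align*}
As long as $\psi$ stays finite, $\phi$ is obtained by a bounded integration, so the whole expression is finite. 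The scalar Riccati equation for $\psi$ has a locally Lipschitz right-hand side and therefore a unique solution on some maximal interval $[0,h^*)$; this gives finiteness of the exponential moment for every $h<h^*$, which is exactly the ``$t_2-t_1$ small enough'' in the statement.

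The point requiring care is justifying that the candidate $\exp(\phi(h)+\psi(h)y_0)$ actually equals the expectation, rather than merely solving the PDE formally. I would argue this by a localization/uniform-integrability argument: the process $\exp(\alpha\int_0^t Y_s\,ds+\phi(h-t)+\psi(h-t)Y_t)$ is, by construction, a nonnegative local martingale on $[0,h]$ and hence a supermartingale, giving immediately the finite upper bound $E[\exp(\alpha\int_0^h Y_s\,ds)\mid Y_0=y_0]\le \exp(\phi(h)+\psi(h)y_0)<\infty$; this inequality alone already suffices for the lemma, since only finiteness (not equality) is claimed. I expect this localization step to be the main obstacle, since one must ensure the stopped exponentials are genuinely integrable so that Fatou's lemma applies; the Feller condition $2\lambda_Y\bar{Y}>\sigma_Y^2$ and the positivity of $Y_s$ keep the integrand controlled and make the supermartingale bound clean. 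An alternative, more elementary route that avoids the Riccati analysis entirely is to note $Y_s\le \bar{Y} + (\text{Gaussian-type fluctuations})$ and dominate $\int_0^h Y_s\,ds$ by a quantity with finite exponential moments for small $h$ via a Gronwall/comparison estimate, but the Riccati approach is cleaner and directly pins down the threshold $h^*$.
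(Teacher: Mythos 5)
Your core argument is sound and genuinely different from the paper's: the Riccati system you write down is the right one (a direct It\^o computation confirms $\psi' = \tfrac{1}{2}\sigma_Y^2\psi^2 - \lambda_Y\psi + \alpha$ and $\phi' = \lambda_Y\bar{Y}\psi$), and your observation that only the supermartingale \emph{inequality} is needed is exactly right. In fact the localization step you flag as the main obstacle is a non-issue: a nonnegative local martingale is automatically a supermartingale (Fatou along any localizing sequence, where the stopped processes are martingales by definition), with no extra integrability argument and no role for the Feller condition. So for a \emph{deterministic} starting value your bound $E[\exp(\alpha\int_0^h Y_s\,ds)\mid Y_0=y_0]\le\exp(\phi(h)+\psi(h)y_0)<\infty$, valid for $h$ below the Riccati blow-up time, is fully rigorous.

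The genuine gap is in your reduction to $t_1=0$. You cannot bound $E[\exp(\alpha\int_0^h Y_s\,ds)\mid Y_0=y_0]$ \emph{uniformly} in $y_0$: since $\alpha>0$ forces $\psi(h)>0$ for $h>0$, your bound grows like $e^{\psi(h)y_0}$, so after the tower rule you are left with $E[\exp(\phi(h)+\psi(h)Y_{t_1})]$ --- precisely an exponential moment of the CIR marginal $Y_{t_1}$, which is finite only when $\psi(h)$ lies below the critical exponent of the (scaled) noncentral chi-squared law of $Y_{t_1}$. This does hold for $t_2-t_1$ small, since $\psi(0)=0$ and $\psi$ is continuous, but the step must be stated and proved, and it is exactly the marginal moment-generating-function estimate on which the paper's entire proof rests; as written, your ``uniform'' reduction would fail. (A repair that stays inside your framework: run the deflator on all of $[0,t_2]$ with inhomogeneity $\alpha\mathbf{1}_{[t_1,t_2]}(s)$; on $[0,t_1]$ the Riccati equation degenerates to $u'=\lambda_Y u-\tfrac{1}{2}\sigma_Y^2u^2$ with terminal value $u(t_1)=\psi(h)$, which stays finite backward to $0$ once $\psi(h)<2\lambda_Y/\sigma_Y^2$, and then the initial bound involves only the deterministic $Y_0$.) For comparison, the paper sidesteps all of this machinery: Jensen's inequality with respect to normalized Lebesgue measure on $[t_1,t_2]$ plus Tonelli reduce the claim to $\sup_{t_1\le t\le t_2}E[\exp(\alpha(t_2-t_1)Y_t)]$, which is then evaluated from the explicit noncentral $\chi^2$ moment generating function of the CIR marginal, giving finiteness for $t_2-t_1<\lambda_Y/(\alpha\sigma_Y^2)$. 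That route only ever touches one-dimensional marginals, so the randomness of $Y_{t_1}$ never causes trouble; your route buys the sharp blow-up threshold $h^*$ for the \emph{integrated} functional, but at the cost of the marginal estimate anyway.
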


\begin{proof}
Jensen's inequality and Tonelli's Theorem show that
\begin{eqnarray*}
E \left[\exp{\left(\alpha \int_{t_1}^{t_2} Y_s ds\right)}\right] &\leq & E \left[\frac{1}{t_2-t_1} \int_{t_1}^{t_2} \exp{\left(\alpha (t_2-t_1) Y_s \right)ds}\right]\\
 &= &  \int_{t_1}^{t_2} E\left[ \frac{\exp{\left(\alpha (t_2-t_1) Y_s \right)}}{t_2-t_1}\right]ds\\
 &\leq & \sup_{t_1 \leq t \leq t_2} E\left[\exp{\left(\alpha (t_2-t_1) Y_t\right)}\right].
\end{eqnarray*}
Given the initial value $Y_0$ we know from  \cite{cox.al.85} that $2 \lambda_Y /(\sigma_Y^2 (1-e^{-\lambda_Y t})) Y_t$ follows a noncentral chi-squared distribution, i.e.,
\begin{equation*} 
\frac{2 \lambda_Y}{\sigma_Y^2 (1-e^{-\lambda_Y t})} Y_t \sim \chi^2
\left(\frac{4 \lambda_Y \bar{Y}}{\sigma_Y^2},Y_0 \frac{4 \lambda_Y}{\sigma_Y^ 2}  \frac{e^{-\lambda_Y t}}{1-e^{-\lambda_Y t}}\right).
\end{equation*}
Thus, the moment generating function of the noncentral chi-squared distribution implies that
\begin{eqnarray*}
\sup_{0< t \leq t_2}E \left[\exp{\left(\alpha t_2 Y_t\right)}\right] &=& \sup_{0< t \leq t_2} E \left[\exp{\left(\alpha t_2 \frac{\sigma_Y^2 (1-e^{-\lambda_Y t})}{2 \lambda_Y} \frac{2 \lambda_Y}{\sigma_Y^2 (1-e^{-\lambda_Y t})} Y_t \right)}\right],\\
 &=& \sup_{0< t \leq t_2} \frac{1}{1-\alpha t_2 \frac{\sigma_Y^2}{\lambda_Y} (1-e^{-\lambda_Y t})} \exp{\left(\frac{2 Y_0 \alpha t_2 e^{-\lambda_Y t}}{1-\alpha t_2 \frac{\sigma_Y^2}{\lambda_Y}(1-e^{-\lambda_Y t})}\right)},
\end{eqnarray*}
which is finite for $t_2 < \lambda_Y /(\alpha \sigma_Y^2)$. The assertion for $t_1>0$ follows analogously.
\end{proof}

With the finite horizon bounds (\ref{bound1infty}-\ref{bound2infty}) at hand, we can now apply Lemma~\ref{lem:dual} to establish the long-run optimality of the candidate portfolio $\widehat{\pi}^\infty$:
\begin{mythm}\label{lemmaoptimalityinf}
Suppose that \eqref{determinantcond} is satisfied if $\gamma <1$, or that 
\begin{equation}\label{assumptionbinfgammageq1}
\left(1-2 \frac{\gamma-1}{\gamma}\rho^2\right) \sqrt{b^2-4 ac}+ b > 0, \quad \text{if } \gamma >1,
\end{equation}
where $a,b$ and $c$ are as defined in Lemma~\ref{lemmaBinfity}.\footnote{Recall from Footnote~\ref{discrimantpos} that $D=b^2-4ac$ is always positive for $\gamma>1$.}  Then, in both cases, the candidate portfolio $\widehat{\pi}^\infty = (\mu_S+ \rho \sigma_Y B^\infty)/\gamma$ is long-run optimal and the corresponding maximal equivalent safe rate is given by
\begin{equation*}
\lim_{T\rightarrow \infty} \frac{1}{(1-\gamma) T} \log{E[(X_T^{\widehat{\pi}^\infty})^{1-\gamma}]} =\frac{A^\infty}{1-\gamma}.
\end{equation*}
\end{mythm}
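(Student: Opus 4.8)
The strategy is a sandwich argument built on Lemma~\ref{lem:dual} together with the two finite-horizon identities from Lemma~\ref{lemmafinitehorizon}. The \emph{lower} bound on the maximal growth rate comes directly from the candidate portfolio: dividing \eqref{bound1infty} by $(1-\gamma)T$ and taking logarithms, the factor $e^{A^\infty T}$ contributes exactly $A^\infty/(1-\gamma)$, so it remains only to show that the correction term grows subexponentially, i.e.\ $\frac{1}{T}\log E^{\widehat{\mathbb{P}}}[e^{B^\infty(Y_0-Y_T)}]\to 0$. For the matching \emph{upper} bound, I would apply Lemma~\ref{lem:dual} to an arbitrary wealth process $X^\pi_T$ and the deflator $\widehat{Z}^\infty_T$, which is admissible by Lemma~\ref{probabilitymeasureBinfty} since $E[X^\pi_T\widehat{Z}^\infty_T]\le x$. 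Taking $\frac{1}{(1-\gamma)T}\log$ of the resulting inequality --- carefully tracking the sign of $1-\gamma$, which flips the inequality twice for $\gamma>1$ and leaves it once for $\gamma<1$, hence gives the same direction in both cases --- bounds the growth rate of \emph{every} portfolio by the right-hand side of \eqref{bound2infty}, whose leading term is again $A^\infty/(1-\gamma)$ provided $E^{\widehat{\mathbb{P}}}[e^{B^\infty(Y_0-Y_T)/\gamma}]$ is likewise subexponential. Thus the whole theorem reduces to controlling two exponential moments of $Y_T$ under the myopic measure $\widehat{\mathbb{P}}$.

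The key computation is therefore to identify the law of $Y_t$ under $\widehat{\mathbb{P}}$. Applying Girsanov's theorem to the density \eqref{eq:myopic}, and using the correlation $\rho$ between $W^Y$ and $W$, one finds that $Y_t$ remains a square-root diffusion under $\widehat{\mathbb{P}}$, with unchanged volatility $\sigma_Y$ and long-run mean $\lambda_Y\bar Y/\widehat\lambda$, but with modified mean-reversion speed
\begin{equation*}
\widehat\lambda=\lambda_Y-\tfrac{1-\gamma}{\gamma}\rho\sigma_Y\mu_S-\sigma_Y^2 B^\infty\Big(1+\tfrac{1-\gamma}{\gamma}\rho^2\Big).
\end{equation*}
Crucially, the bound \eqref{lambdaphatpos} established in Lemma~\ref{lemmaBinfity} is \emph{exactly} equivalent to $\widehat\lambda>0$, so that $Y_t$ is ergodic under $\widehat{\mathbb{P}}$ and its law converges to a stationary Gamma distribution. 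This is what makes the exponential moments stay bounded rather than explode with $T$.

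I would then compute $E^{\widehat{\mathbb{P}}}[e^{\theta Y_T}]$ for $\theta=-B^\infty$ and $\theta=-B^\infty/\gamma$ explicitly, via the noncentral chi-squared representation of the square-root process used in the proof of Lemma~\ref{expboundnovikov}, now with parameters $\widehat\lambda,\sigma_Y$. As $T\to\infty$ this moment-generating function converges to the finite constant $(1-\theta\sigma_Y^2/(2\widehat\lambda))^{-2\lambda_Y\bar Y/\sigma_Y^2}$, a positive number, so that $\frac1T\log E^{\widehat{\mathbb{P}}}[e^{\theta Y_T}]\to 0$ --- \emph{provided} $\theta<2\widehat\lambda/\sigma_Y^2$. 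For $\gamma<1$ one has $B^\infty>0$, so $\theta=-B^\infty<0$ lies trivially below the threshold and only $\widehat\lambda>0$ (i.e.\ \eqref{determinantcond}) is needed. For $\gamma>1$ one has $B^\infty<0$, so the binding constraint is $-B^\infty<2\widehat\lambda/\sigma_Y^2$; a short computation using $b=\lambda_Y-\tfrac{1-\gamma}{\gamma}\mu_S\rho\sigma_Y$ and $B^\infty=(\sqrt{D}-b)/2c$ from \eqref{odeBinfinity} shows this is \emph{precisely} the hypothesis \eqref{assumptionbinfgammageq1}, while the smaller exponent $-B^\infty/\gamma$ is then automatically admissible.

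Combining the two pieces, both the lower bound \eqref{bound1infty} and the duality upper bound \eqref{bound2infty} have long-run growth rate $A^\infty/(1-\gamma)$, which establishes simultaneously the optimality of $\widehat\pi^\infty$ and the stated value of the equivalent safe rate. I expect the main obstacle to be the Girsanov computation of $\widehat\lambda$ together with the algebra showing that \eqref{assumptionbinfgammageq1} is exactly the threshold condition $-B^\infty<2\widehat\lambda/\sigma_Y^2$: everything hinges on this matching, since it is what couples the abstract moment bound to the explicit parameter restriction in the statement. A minor technical point is to verify that the denominators $1-\theta\sigma_Y^2(1-e^{-\widehat\lambda t})/(2\widehat\lambda)$ stay bounded away from zero uniformly in $t$, which follows from the strict inequality $\theta<2\widehat\lambda/\sigma_Y^2$.
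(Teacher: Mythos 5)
Your proposal is correct and follows essentially the same route as the paper's proof: the duality sandwich via Lemma~\ref{lem:dual} and the finite-horizon identities of Lemma~\ref{lemmafinitehorizon}, the Girsanov computation showing $Y_t$ remains a square-root diffusion under $\widehat{\mathbb{P}}$ with positive mean-reversion speed $\widehat\lambda$ (guaranteed by \eqref{lambdaphatpos}), and the identification of \eqref{assumptionbinfgammageq1} as exactly the threshold $-B^\infty<2\widehat\lambda/\sigma_Y^2$ for the binding exponent, with $-B^\infty/\gamma$ then automatic. The only (cosmetic) difference is that you evaluate $\lim_{T\to\infty}E^{\widehat{\mathbb{P}}}[e^{\theta Y_T}]$ via the explicit noncentral chi-squared moment-generating function of the square-root process, whereas the paper invokes the ergodic theorem and the stationary Gamma density -- both yield the same finiteness threshold, and your route has the minor advantage of giving uniform-in-$T$ control directly.
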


\begin{proof}
Lemma~\ref{lem:dual} and~\eqref{eq:hestonbound} imply that the wealth process of any portfolio $\pi$ satisfies
\begin{equation*}
\frac{1}{1-\gamma} E\left[(X_T^{\pi})^{1-\gamma}\right] \leq \frac{x^{1-\gamma}}{1-\gamma} E\left[(Z^\infty_T)^{1-1/\gamma}\right]^\gamma.
\end{equation*} 
This inequality in turns yields the following upper bound, valid for any portfolio $\pi_t$:
\begin{equation*}
\lim_{T\rightarrow \infty} \frac{1}{(1-\gamma) T} \log{E[(X_T^{\pi})^{1-\gamma}]} \leq  \lim_{T\rightarrow \infty} \frac{\gamma}{(1-\gamma) T} \log{E[(Z^\infty_T)^{\frac{\gamma-1}{\gamma}}]}.
\end{equation*}
In view of~\eqref{bound2infty}, to show that the upper bound of the growth rate equals $A^\infty/(1-\gamma)$, it remains to check that
\begin{equation*}
\lim_{T\rightarrow \infty} E^{\widehat{\mathbb{P}}}\left[ e^{\frac{B^{\infty}(Y_0-Y_T)}{\gamma}}\right] < \infty,
\end{equation*}
i.e., that $e^{-B^\infty Y_T/\gamma}$ has bounded expectation under $\widehat{\mathbb{P}}$ as $T\rightarrow \infty$.
To this end, first note the dynamics of the variance process $Y_t$ under the measure $\widehat{\mathbb{P}}$:
\begin{equation*}
d Y_t = \Big[\lambda_Y \bar{Y}- \underbrace{\left(\lambda_Y- \sigma_Y^2 B^\infty- \sigma_Y \rho \frac{1-\gamma}{\gamma}(\mu_S+ \rho \sigma_Y B^\infty)\right)}_{=: \lambda^{\widehat{\mathbb{P}}, Y}} Y_t\Big] dt+ \sigma_Y \sqrt{Y_t} d W^{\widehat{\mathbb{P}},Y}_t,
\end{equation*}
where 
\begin{equation*}
W^{\widehat{\mathbb{P}},Y}_t = W^Y_t -\int_0^t B^\infty \sigma_Y \sqrt{Y_s} ds + \int_0^t \rho \sqrt{Y_s} \left(1-\frac{1}{\gamma}\right) (\mu_S+ \rho \sigma_Y B^\infty) ds
\end{equation*}
is a $\widehat{\mathbb{P}}$- Brownian motion by Girsanov's Theorem. Notice that $\lambda^{\widehat{\mathbb{P}}, Y}$ is positive by~\eqref{lambdaphatpos}. The Feller Condition $2 \lambda_Y \bar{Y} > \sigma_Y^2$ implies that the stationary law of the square-root process $Y_t$ is a Gamma distribution with shape parameter $2 \lambda_Y \bar{Y} /\sigma_Y^2$ and scale parameter $\sigma_Y^2/2 \lambda^{\widehat{\mathbb{P}}, Y}$ (cf.~\cite{cox.al.85}), whose density function is given by
\begin{equation*}
\nu(dx) = \frac{x^{\frac{2 \lambda_Y \bar{Y}}{\sigma_Y^2}-1} e^{- \frac{2 \lambda^{\widehat{\mathbb{P}}, Y}}{\sigma_Y^2} x}}{\Gamma\left(\frac{2 \lambda_Y \bar{Y}}{\sigma_Y^2}\right) \left(\frac{\sigma_Y^2}{2 \lambda^{\widehat{\mathbb{P}}, Y}}\right)^{\frac{2 \lambda_Y \bar{Y}}{\sigma_Y^2}}}  dx.
\end{equation*}
The ergodic theorem~\cite[Formula II.35]{borodin.salminen.96} in turn yields
\begin{eqnarray*}
\lim_{T\rightarrow \infty} E^{\widehat{\mathbb{P}}}\left[e^{\frac{ -B^\infty Y_T}{\gamma}}\right] &=& \int_0^\infty e^{\frac{ -B^\infty y}{\gamma}} \nu (dy)\\
 &=& \frac{1}{\Gamma\left(\frac{2 \lambda_Y \bar{Y}}{\sigma_Y^2}\right) \left(\frac{\sigma_Y^2}{2 \lambda^{\widehat{\mathbb{P}}, Y}}\right)^{\frac{2 \lambda_Y \bar{Y}}{\sigma_Y^2}}} \int_0^\infty y^{\frac{2 \lambda_Y \bar{Y}}{\sigma_Y^2}-1} e^{-\frac{2}{\sigma_Y^2}\left(\lambda^{\widehat{\mathbb{P}}, Y} + \frac{B^\infty \sigma_Y^2}{2 \gamma} \right)y}  dy< \infty.\\
\end{eqnarray*}
Here, we have used in the last step that $\lambda^{\widehat{\mathbb{P}}, Y}+ B^\infty \sigma_Y^2/2\gamma$ is positive, which follows from Lemma~\ref{lemmaBinfity} and~\eqref{assumptionbinfgammageq1}. A similar argument using~\eqref{bound1infty} shows that the upper bound is attained by the wealth process corresponding to the portfolio $\widehat{\pi}^\infty$, so that the latter is indeed long-run optimal.
\end{proof}

Let us comment briefly on the conditions \eqref{determinantcond} and \eqref{assumptionbinfgammageq1}. The first one is needed to ensure that the maximal expected utility remains finite for arbitrarily long finite horizons (see \cite[Section 3.1]{kallsen.muhlekarbe.10} for more details), which is clearly necessary to obtain a finite equivalent safe rate. Accordingly, this assumption is satisfied automatically if the investor's utility function is bounded from above ($\gamma>1$). In contrast, the second assumption \eqref{assumptionbinfgammageq1} is not needed for low risk aversion ($\gamma<1$), but becomes active if the latter is sufficiently high.\footnote{This condition is a special case of~\cite[Condition (71)]{guasoni.robertson.12}; in the special case considered here ($\nu_0=0$ in their notation), the first condition in \cite[Condition (71)]{guasoni.robertson.12} always holds true. For the parameter estimates of Pan \cite{pan.02}, Assumption \eqref{assumptionbinfgammageq1} is satisfied for all values of $\gamma$.}
 Here, the intuition is that the long-run optimality of $\widehat{\pi}^\infty$ can fail if this portfolio leads to catastrophic results close to maturity because its intertemporal hedging term differs too much from its finite-horizon counterpart. If losses are limited because the utility function is bounded below ($\gamma<1$), then this effect is washed away in the long-run limit. For sufficiently high risk aversion, however, the candidate portfolio $\widehat{\pi}^\infty$ may not be optimal, see \cite[Proposition 25]{guasoni.robertson.12} for more details. This emphasizes the need for rigorous verification theorems, because heuristic computations alone may lead to wrong results.

As a side product, the finite horizon bounds from Lemma \ref{lemmafinitehorizon} allow to assess the performance of the long-run optimal portfolio on any finite horizon. In Figure \ref{fig:esr_heston}, we compare these bounds to the optimal equivalent safe rate that can be obtained by applying the finite-horizon optimizer $\widehat\pi_t$ from \eqref{eq:portfolio_heston_finite}. Evidently, the performance of the long-run optimizer (dot-dashed in Figure~\ref{fig:esr_heston}) cannot be distinguished from the one of the finite-horizon optimizer (solid in Figure~\ref{fig:esr_heston}) even for short horizons here. Both portfolios achieve the long-run optimal growth rate for horizons as short as two years. Moreover, the latter is quite close to its counterpart in a Black-Scholes model with the same mean returns and volatilities, so that the welfare effect of stochastic volatility turns out to be relatively small, in line with the modest intertemporal hedging terms reported in Figure \ref{fig:portfolio_heston}.

\begin{figure}
\centering
\includegraphics[width=0.8\textwidth]{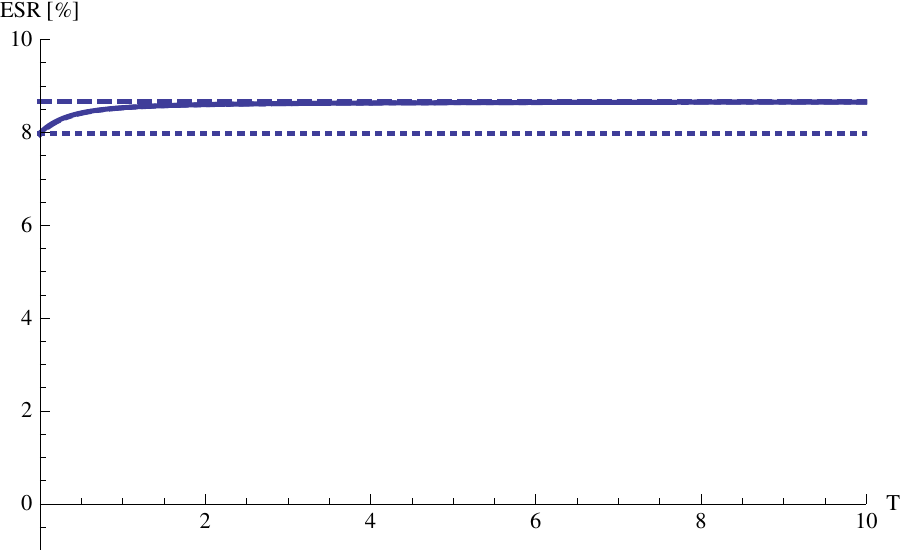}
\caption{\label{fig:esr_heston}
Heston-type model: long-run equivalent safe rate (dashed) and equivalent safe rate for investment on $[0,T]$, for the finite-horizon optimizer $\widehat\pi_t$ on $[0,T]$ (solid), the long-run optimizer $\widehat\pi^\infty$ (dot-dashed), and the optimizer in a Black-Scholes model with the same mean returns and volatilities (dotted). Risk aversion is $\gamma=5$ and the (yearly)  model parameters are $r=0.033$, $\mu_S=4.4$, $\lambda_Y=5.3$, $\bar Y=0.024$, $\sigma_Y=0.38$ and $\rho=-0.57$ (cf.\ \cite[Tables 1 and 6]{pan.02}).
}
\end{figure}

\subsection{Predictable Returns}

Now, let us turn to a rigorous verification theorem for the long-run problem \eqref{eq:lr} in the Kim and Omberg model (\ref{eq:ko1}-\ref{eq:ko2}), where we again focus on the long-run asymptotics. (A verification theorem for the finite horizon problem can be found in \cite{honda.kamimura.11}.) Compared to the Heston-type model (\ref{eq:heston1}-\ref{eq:heston2}) discussed in Section \ref{sec:heston2}, all the formulas become somewhat more involved, but the same basic approach is still applicable. Hence, we provide less details here, focusing on the arguments that require nontrivial adaptations.

As before, the first step is to show that the constant $C^\infty$ -- which determines $B^\infty, A^\infty$ and thereby also the candidates for the intertemporal hedging term and the value function -- is indeed well defined as the root of a quadratic equation, in line with the heuristics leading to \eqref{eq:Cko}:

\begin{mylemma}\label{lemma:GHinftyko}
If $\gamma <1$,\footnote{If $\gamma>1$, one readily verifies that \eqref{determinantcondHko} is always satisfied.} suppose that
\begin{equation}\label{determinantcondHko}
b^2 - 4 a c > 0, 
\end{equation}
where 
\begin{equation*}
c=\left(\frac{\gamma-1}{\gamma}\rho^2-1\right)\frac{\sigma_Y^2}{2}, \quad b=\frac{\gamma-1}{\gamma} \frac{\rho\sigma_Y}{\sigma}+\lambda, \quad a=\frac{\gamma-1}{\gamma} \frac{1}{2\sigma^2}.
\end{equation*}
Then, there exists a solution $C^\infty \in \mathbb{R}$ of 
\begin{eqnarray}
0 &=& c (C^\infty)^2+ b C^\infty + a\label{odeHinfinityko}
\end{eqnarray}
which satisfies $C^\infty < 0$ (for $\gamma>1$) resp.\ $C^\infty > 0$ (for $\gamma <1$). Moreover,
\begin{eqnarray}\label{upperboundBCinfy}
C^\infty <\frac{\lambda_Y-\frac{1-\gamma}{\gamma}\frac{\sigma_Y \rho}{\sigma}}{\sigma_Y^2(1+ \rho^2 \frac{1-\gamma}{\gamma})}.
\end{eqnarray}
\end{mylemma}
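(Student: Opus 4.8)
The plan is to mirror the proof of Lemma~\ref{lemmaBinfity} almost verbatim, since the structure is identical and only the explicit values of the coefficients $a,b,c$ differ (here $a$ carries a factor $1/(2\sigma^2)$ rather than $\mu_S^2/2$, and $b$ a factor $\rho\sigma_Y/\sigma$ rather than $\mu_S\rho\sigma_Y$). First I would exhibit the explicit candidate root
\begin{equation*}
C^\infty = \frac{\sqrt{b^2-4ac}-b}{2c},
\end{equation*}
which solves \eqref{odeHinfinityko} and is real because the discriminant $D=b^2-4ac$ is positive: this is assumed in \eqref{determinantcondHko} for $\gamma<1$, and is automatic for $\gamma>1$ (as for the Heston model, a short computation shows $D>0$ whenever $\gamma>1$).

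The crucial preliminary observation is that $c<0$ in every case. Indeed, the inequality $\frac{\gamma-1}{\gamma}\rho^2<1$ holds for $\gamma>1$ (because then $\frac{\gamma-1}{\gamma}\in(0,1)$ and $\rho^2\le 1$) as well as for $\gamma<1$ (because then $\frac{\gamma-1}{\gamma}<0$), so the bracket defining $c$ is always negative. Thus the parabola $f(x)=cx^2+bx+a$ opens downward, and $C^\infty$ is its smaller root. For $\gamma>1$ one has $a=\frac{\gamma-1}{\gamma}\frac{1}{2\sigma^2}>0$, hence $f(0)=a>0$; since the parabola opens downward and is positive at the origin, its two roots straddle $0$ and the smaller root $C^\infty$ is negative. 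For $\gamma<1$ one instead has $a<0$, so $f(0)<0$, while $b=\frac{\gamma-1}{\gamma}\frac{\rho\sigma_Y}{\sigma}+\lambda_Y>0$ because $\rho<0$ forces $\frac{\gamma-1}{\gamma}\frac{\rho\sigma_Y}{\sigma}>0$; since $f'(0)=b>0$ and the vertex $-b/(2c)$ is positive, both roots lie to the right of the origin, so the smaller root $C^\infty$ is positive.

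Finally, for the upper bound \eqref{upperboundBCinfy}, I would write $C^\infty=-\frac{b}{2c}+\frac{\sqrt{D}}{2c}$ and note that $\frac{\sqrt{D}}{2c}<0$ because $c<0$ and $D>0$, so that $C^\infty<-\frac{b}{2c}$. A routine simplification, using $\frac{\gamma-1}{\gamma}=-\frac{1-\gamma}{\gamma}$ to pull a minus sign through both numerator and denominator, then rewrites $-\frac{b}{2c}$ as exactly the right-hand side of \eqref{upperboundBCinfy}.

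There is no genuine obstacle here: the argument concerns only the sign pattern of a quadratic and is a direct transcription of the Heston-case proof. The only point demanding minor care is tracking the signs of $a$ and $b$ separately in the two regimes $\gamma>1$ and $\gamma<1$, together with the fact that dividing by the negative leading coefficient $c$ reverses inequalities when identifying which root is the smaller one.
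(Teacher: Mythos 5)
Your proposal is correct and matches the paper's proof exactly: the paper's own argument for this lemma consists of defining $C^\infty = \frac{\sqrt{b^2-4ac}-b}{2c}$ and invoking ``a similar argument as in Lemma~\ref{lemmaBinfity},'' which is precisely the sign analysis of the downward-opening parabola $f(x)=cx^2+bx+a$ (via $f(0)=a$ and $f'(0)=b$ in the two regimes) and the bound $C^\infty < -b/(2c)$ that you spell out. Your sign-tracking of $a$, $b$, $c$ and the identification of $C^\infty$ as the smaller root (noting that dividing by $2c<0$ reverses the ordering) are all accurate.
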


\begin{proof}
Define
\begin{eqnarray*}
C^\infty = \frac{\sqrt{b^2-4 a c}-b}{2 c}. 
\end{eqnarray*}
Then, the assertions follow by a similar argument as in Lemma~\ref{lemmaBinfity}.
\end{proof}

In accordance with \eqref{eq:odeko2} and \eqref{eq:odeko3}, set
\begin{align}
B^\infty &= \frac{\lambda_Y \bar{Y} C^\infty}{2 c C^\infty + b},\label{eq:Binfko}\\
A^\infty &= (1-\gamma) r - c (B^\infty)^2+ \lambda_Y \bar Y B^\infty + \frac{\sigma_Y^2}{2}C^\infty.\label{eq:Ainfko}
\end{align}

As in Section \ref{sec:heston2}, a candidate for the long-run dual deflator is readily derived from the candidate long-run value function \eqref{eq:vfkolr} by differentiating with respect to wealth and normalizing. By adapting the argument in Lemma~\ref{lem:dual}, it follows that this candidate is indeed a supermartingale deflator:


\begin{mylemma}\label{probabilitymeasureFGHinftyko}
Let $T> 0$ and define
$$\widehat{Z}^\infty_t=(X_t^{\widehat{\pi}^\infty}/x)^{-\gamma}\exp(-A^\infty t+B^\infty (Y_t-Y_0)+\frac{1}{2}C^\infty (Y_t^2-Y_0^2)),$$
where $X^{\widehat{\pi}^\infty}$ denotes the wealth process of the strategy $\widehat{\pi}^{\infty}_t = \frac{Y_t}{\gamma \sigma^2}+ \frac{\rho \sigma_Y}{\gamma\sigma} (B^\infty+ C^\infty Y_t)$ and the constants $C^\infty$ and $B^\infty$, $A^\infty$ are defined as in Lemma \ref{lemma:GHinftyko} and \eqref{eq:Binfko}, \eqref{eq:Ainfko}, respectively. Then,
\begin{equation*}
\widehat{Z}^\infty_t = e^{-rt}\scr{E}\left(\int_0^\cdot \sigma_Y (B^\infty+ C^\infty Y_s) d W_s^Y- \int_0^\cdot \frac{1}{\sigma}(Y_s+ \rho \sigma \sigma_Y (B^\infty+ C^\infty Y_s)) dW_s \right)_t
\end{equation*}
and the process $\widehat{Z}^\infty_t$ satisfies the conditions of Lemma~\ref{lem:dual}, i.e.,
\begin{equation*}
E \left[X_T^\pi \widehat{Z}^\infty_T\right] \leq x,
\end{equation*}
where $X^\pi_t$ denotes the wealth process of an arbitrary portfolio $\pi_t$.
\end{mylemma}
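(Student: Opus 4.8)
The plan is to follow the blueprint of Lemma~\ref{probabilitymeasureBinfty} for the Heston model, with the one essential new feature that the candidate value function \eqref{eq:vfkolr} is exponentially \emph{quadratic} rather than affine in the state variable. Accordingly, I would first establish the explicit stochastic-exponential representation of $\widehat{Z}^\infty_t$, and then use it to verify the supermartingale bound.

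For the first part, I would compute the two factors of $\widehat{Z}^\infty_t$ separately. Using \eqref{eq:stoexp} together with the definition \eqref{eq:DD} of the stochastic exponential, $(X_t^{\widehat{\pi}^\infty})^{-\gamma}$ becomes $x^{-\gamma}$ times the stochastic exponential of $-\gamma\int_0^\cdot \widehat{\pi}^\infty_u\sigma\,dW_u$ multiplied by a drift correction; since $\widehat{\pi}^\infty_t$ is affine in $Y_t$, this drift now carries terms proportional to $Y_t^2$, $Y_t$, and constants. For the second factor, I would apply It\^o's formula to $g(Y_t)=B^\infty Y_t+\tfrac{1}{2}C^\infty Y_t^2$, writing $\exp(-A^\infty t+g(Y_t))$ as the stochastic exponential of $\int_0^\cdot \sigma_Y(B^\infty+C^\infty Y_s)\,dW_s^Y$ times a drift factor; here the quadratic term produces the extra It\^o contribution $\tfrac{1}{2}C^\infty\sigma_Y^2\,dt$ from the second derivative. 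Combining the two via Yor's formula \eqref{eq:Yor}, the Brownian parts add up to exactly the integrand claimed in the statement, so all that remains is to check that the accumulated drift collapses to $-rt$.

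The crux is precisely this drift cancellation. After collecting terms, the integrand splits into coefficients of $Y_t^2$, $Y_t$, and a constant. The $Y_t^2$-coefficient vanishes by the quadratic equation \eqref{odeHinfinityko} satisfied by $C^\infty$ (Lemma~\ref{lemma:GHinftyko}); the $Y_t$-coefficient vanishes by the definition \eqref{eq:Binfko} of $B^\infty$, which is exactly the linear relation \eqref{eq:odeko2}; and the constant reduces to $-r$ by the definition \eqref{eq:Ainfko} of $A^\infty$, which is \eqref{eq:odeko3}. This three-fold matching --- as opposed to the single quadratic equation used in the Heston case --- is the main bookkeeping obstacle, since the quadratic exponent forces one to track the $(C^\infty)^2 Y_t^2$ piece arising from the quadratic variation $\langle g(Y),g(Y)\rangle$ in \emph{both} the It\^o correction and the Yor cross-variation. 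Once the three algebraic identities are in place, the explicit representation of $\widehat{Z}^\infty_t$ follows.

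For the supermartingale bound, I would take an arbitrary portfolio $\pi_t$ and form $X^\pi_t\widehat{Z}^\infty_t$ using \eqref{eq:stoexp} and the representation just established. Yor's formula \eqref{eq:Yor} turns the product into a single stochastic exponential, and the key observation is that its drift vanishes identically, \emph{independently} of the particular values of $A^\infty,B^\infty,C^\infty$: the $r\,dt$ term from the wealth process cancels the $-r\,dt$ from the $e^{-rt}$ factor of the deflator, while the $\pi_t Y_t\,dt$ return drift is exactly offset by the cross-variation between the martingale part $\pi_t\sigma\,dW_t$ of the wealth process and that of the deflator --- the $\rho$-weighted contributions of the $dW$- and $dW^Y$-integrals cancel, leaving precisely $-\pi_t Y_t\,dt$. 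Hence $X^\pi_t\widehat{Z}^\infty_t$ is a nonnegative local martingale, and therefore a supermartingale by Fatou's lemma, so that $E[X^\pi_T\widehat{Z}^\infty_T]\le X^\pi_0\widehat{Z}^\infty_0=x$ as claimed.
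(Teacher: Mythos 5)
Your proposal is correct and follows exactly the route the paper takes: its proof of this lemma simply invokes the same calculations as in Lemma~\ref{probabilitymeasureBinfty}, via the stochastic-exponential identity \eqref{eq:DD} and Yor's formula \eqref{eq:Yor}, which is precisely what you carry out, correctly identifying the extra It\^o term $\tfrac{1}{2}C^\infty\sigma_Y^2\,dt$ from the quadratic exponent and the three-fold drift cancellation via \eqref{odeHinfinityko}, \eqref{eq:Binfko}, and \eqref{eq:Ainfko}. Your supermartingale argument (drift of $X^\pi_t\widehat{Z}^\infty_t$ vanishing identically, then Fatou) also matches the paper's Heston-case computation verbatim.
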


\begin{proof}
This follows from similar calculations as in Lemma~\ref{probabilitymeasureBinfty}, using the definition of the stochastic exponential \eqref{eq:DD} and Yor's Formula \eqref{eq:Yor}.
\end{proof}

The next step is to compute finite horizon bounds for the wealth process of the candidate portfolio $\widehat{\pi}^\infty_t$ and the corresponding candidate deflator $\widehat{Z}^\infty_t$:

\begin{mylemma}\label{lemmafinitehorizonboundsko}
Fix a time horizon $T>0$ and let $C^\infty$, $B^\infty$ and $A^\infty$ be defined as in Lemma~\ref{lemma:GHinftyko}, \eqref{eq:Binfko}, and \eqref{eq:Ainfko}, respectively. Then, the following finite horizon bounds hold for the wealth process $X^{\widehat{\pi}^\infty}_t$ corresponding to the portfolio $\widehat{\pi}^{\infty}_t = \frac{Y_t}{\gamma \sigma^2}+ \frac{\rho \sigma_Y}{\gamma\sigma} (B^\infty+ C^\infty Y_t)$ and the process $\widehat{Z}^\infty_t$ from Lemma~\ref{probabilitymeasureFGHinftyko}:
\begin{eqnarray}
E\left[(X^{\widehat{\pi}^\infty}_T)^{1-\gamma}\right]&=& x^{1-\gamma} e^{A^\infty T} E^{\widehat{\mathbb{P}}}\left[e^{q(Y_T)-q(Y_0)}\right],\label{bound1infko}\\
E\left[(\widehat{Z}^\infty_T)^{1-\frac{1}{\gamma}}\right]^{\gamma} &=& e^{A^\infty T} E^{\widehat{\mathbb{P}}}\left[e^{\frac{q(Y_T)-q(Y_0)}{\gamma}}\right]^{\gamma},\label{bound2infko}
\end{eqnarray}
where $q(Y_t) = -B^\infty Y_t- C^\infty Y_t^2/2$ and the probability measure $\widehat{\mathbb{P}}$ is defined by
\begin{equation}\label{eq:myopicko}
\frac{d \widehat{\mathbb{P}}|_{\scr{F}_T}}{d \mathbb{P}|_{\scr{F}_T}} = \scr{E} \left(\int_0^\cdot \sigma_Y (B^\infty+ C^\infty Y_s) d W_s^Y+ \int_0^\cdot \left(\frac{1}{\gamma}-1\right) \frac{1}{\sigma} \left(Y_s+ \sigma \sigma_Y \rho (B^\infty + C^\infty Y_s)\right) dW_s \right)_T.
\end{equation}
\end{mylemma}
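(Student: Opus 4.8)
The plan is to follow the template established in the proof of Lemma~\ref{lemmafinitehorizon} for the Heston model, since the overall structure is identical; only two ingredients require genuine adaptation to the present Ornstein--Uhlenbeck setting. First, because the state variable $Y_t$ in \eqref{eq:ko2} is Gaussian rather than a square-root process and the exponent $q(Y_t)=-B^\infty Y_t-C^\infty Y_t^2/2$ is \emph{quadratic} rather than linear, the quadratic variation of the exponent in \eqref{eq:myopicko} is a quadratic (not affine) functional of $Y_t$. Second, the cancellation of drift terms will invoke the \emph{full} system \eqref{eq:odeko1}--\eqref{eq:odeko3} defining $A^\infty,B^\infty,C^\infty$ together with It\^o's formula applied to the quadratic $q$.

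I would begin by verifying that the stochastic exponential on the right-hand side of \eqref{eq:myopicko} is a true martingale, so that $\widehat{\mathbb{P}}$ is a well-defined probability measure. Both integrands are affine in $Y_s$, so the instantaneous quadratic variation rate of the driving martingale is a quadratic polynomial $g(Y_s)=\alpha_0+\alpha_1 Y_s+\alpha_2 Y_s^2$ with nonnegative leading coefficient $\alpha_2\ge 0$ (being a variance), and Novikov's condition requires finiteness of $E[\exp(\frac{1}{2}\int_0^T g(Y_s)\,ds)]$. As in the Heston case, this cannot be checked globally: a quadratic functional of a Gaussian process has finite exponential moments only below a critical threshold, so the quantity may be infinite for large $T$. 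The remedy is again the partitioned version of Novikov's condition \cite[Corollary 3.5.14]{karatzas.shreve.91}: on a sufficiently fine equidistant partition of $[0,T]$ it suffices to bound $E[\exp(\alpha\int_{t_k}^{t_{k+1}}Y_s^2\,ds)]$ on each subinterval, the linear and constant parts of $g$ being absorbed into the quadratic term by completing the square. Hence I would first record the Ornstein--Uhlenbeck analogue of Lemma~\ref{expboundnovikov}: for any $\alpha$ and all sufficiently short intervals $[t_1,t_2]$ one has $E[\exp(\alpha\int_{t_1}^{t_2}Y_s^2\,ds)]<\infty$. This is a standard exponential moment of a quadratic functional of a Gaussian process, obtainable either by solving the associated scalar Riccati equation for the moment generating function or directly from the Gaussian law of $(Y_s)$, with the critical interval length dictating how fine the partition must be.

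With the martingale property in hand, I would derive the pathwise identity underlying \eqref{bound1infko}. Writing $X_T^{\widehat{\pi}^\infty}$ through the stochastic exponential \eqref{eq:stoexp} with $\mu(Y)=Y$ and $\sigma(Y)=\sigma$, raising to the power $1-\gamma$ via \eqref{eq:DD}, and factoring out $e^{rT}\widehat{Z}^\infty_T$ exactly as in the Heston computation, the goal is to reorganize the result into $x^{1-\gamma}\,(d\widehat{\mathbb{P}}/d\mathbb{P})\,\exp(q(Y_T)-q(Y_0)+A^\infty T)$. The crucial manipulation is to eliminate the integral $\int_0^T\sigma_Y(B^\infty+C^\infty Y_t)\,dW^Y_t$: applying It\^o's formula to $q(Y_t)$ and using the dynamics \eqref{eq:ko2} yields $\int_0^T\sigma_Y(B^\infty+C^\infty Y_t)\,dW^Y_t=-(q(Y_T)-q(Y_0))+\int_0^T[\dots]\,dt$, which converts the stochastic integral into the boundary term $q(Y_T)-q(Y_0)$ plus additional drift contributions. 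All remaining $dt$-integrals---those proportional to $Y_t^2$, to $Y_t$, and constant---then cancel precisely upon inserting the three algebraic relations \eqref{eq:odeko1}, \eqref{eq:odeko2}, \eqref{eq:odeko3}, which is exactly the quadratic counterpart of the single quadratic equation used for Heston. Recognizing the remaining exponential martingale as the density \eqref{eq:myopicko}, I obtain $(X_T^{\widehat{\pi}^\infty})^{1-\gamma}=x^{1-\gamma}\,(d\widehat{\mathbb{P}}/d\mathbb{P})\,\exp(q(Y_T)-q(Y_0)+A^\infty T)$, and \eqref{bound1infko} follows by taking $\mathbb{P}$-expectations and changing measure to $\widehat{\mathbb{P}}$.

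The second bound \eqref{bound2infko} is obtained by the same recipe applied to $(\widehat{Z}^\infty_T)^{1-1/\gamma}$, using the explicit form of $\widehat{Z}^\infty_t$ from Lemma~\ref{probabilitymeasureFGHinftyko}: one again replaces the $dW^Y$-integral via It\^o's formula on $q$, cancels drifts using \eqref{eq:odeko1}--\eqref{eq:odeko3}, and arrives at $(\widehat{Z}^\infty_T)^{1-1/\gamma}=(d\widehat{\mathbb{P}}/d\mathbb{P})\,\exp((q(Y_T)-q(Y_0))/\gamma+A^\infty T/\gamma)$; taking expectations and raising to the power $\gamma$ gives \eqref{bound2infko}. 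I expect the main obstacle to be the martingale-property step rather than the bookkeeping: the explosion of exponential moments of the quadratic Gaussian functional $\int Y_s^2\,ds$ is the genuinely delicate point and is what forces the partitioned Novikov argument, whereas the drift cancellations, though lengthy, are mechanical once It\^o's formula on the quadratic $q$ and the defining equations \eqref{eq:odeko1}--\eqref{eq:odeko3} are in place.
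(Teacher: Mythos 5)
Your proposal is correct and follows essentially the same route as the paper: the paper's proof likewise establishes the martingale property of \eqref{eq:myopicko} via the partitioned Novikov condition \cite[Corollary 3.5.14]{karatzas.shreve.91} combined with precisely your Ornstein--Uhlenbeck exponential-moment lemma (this is the paper's Lemma~\ref{expboundnovikovko}, proved by Jensen/Tonelli and the Gaussian law, with the elementary bound $(a+b)^2 \leq 2(a^2+b^2)$ playing the role of your completing-the-square step), and then declares the remaining pathwise identities and drift cancellations to follow verbatim from the Heston computation in Lemma~\ref{lemmafinitehorizon}. Your write-up merely makes explicit what ``verbatim'' entails---It\^o's formula on the quadratic $q$ and cancellation via all three relations \eqref{eq:odeko1}--\eqref{eq:odeko3}---which is accurate and complete.
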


\begin{proof}
Similarly as in Lemma~\ref{lemmafinitehorizon}, it follows from Lemma~\ref{expboundnovikovko} below, Novikov's Condition as in \cite[Corollary 3.5.14]{karatzas.shreve.91} and the algebraic inequality $(a+b)^2 \leq 2 (a^2+b^2)$ that the stochastic exponential on the right-hand side of \eqref{eq:myopicko} is a true martingale, and therefore is the
 density process of $\widehat{\mathbb{P}}$ with respect to $\mathbb{P}$. The remaining assertions then follow verbatim as in Lemma~\ref{lemmafinitehorizon}.
\end{proof}

To determine the density process of $\widehat{\mathbb{P}}$ with respect to $\mathbb{P}$, the following analogue of Lemma \ref{expboundnovikov} is needed in the proof of Lemma \ref{lemmafinitehorizonboundsko}:

\begin{mylemma}\label{expboundnovikovko}
Let $\alpha$ and $t_2 > 0$. Then for all $0\leq t_1 < t_2$ with $t_2-t_1$ small enough, the Ornstein-Uhlenbeck process $Y_t$ satisfies
\begin{equation*}
E\left[\exp{\left(\alpha \int_{t_1}^{t_2} Y_s^2 ds\right)}\right] < \infty.
\end{equation*}
\end{mylemma}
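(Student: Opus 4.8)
The plan is to follow the same two-step strategy as in the proof of Lemma~\ref{expboundnovikov}, replacing the noncentral chi-squared moment generating function of the square-root process by the Gaussian exponential moment of the Ornstein-Uhlenbeck process. Since $Y_s^2 \geq 0$, the claim is trivial for $\alpha \leq 0$, so I would assume $\alpha > 0$. First, I would apply Jensen's inequality to the convex function $\exp$ together with Tonelli's theorem, exactly as in Lemma~\ref{expboundnovikov}, to obtain
\begin{equation*}
E\left[\exp\left(\alpha \int_{t_1}^{t_2} Y_s^2\, ds\right)\right] \leq \sup_{t_1 \leq t \leq t_2} E\left[\exp\left(\alpha(t_2-t_1) Y_t^2\right)\right].
\end{equation*}
This reduces matters to controlling a single pointwise-in-time exponential-square moment, uniformly over the short interval $[t_1,t_2]$.

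Second, I would use that the solution of \eqref{eq:ko2} is Gaussian, $Y_t \sim N(m_t,v_t)$, with mean $m_t = \bar{Y} + (Y_0-\bar{Y})e^{-\lambda_Y t}$ and variance $v_t = \frac{\sigma_Y^2}{2\lambda_Y}(1-e^{-2\lambda_Y t})$. Recall that for $Z \sim N(m,v)$ and $\beta < 1/(2v)$,
\begin{equation*}
E\left[e^{\beta Z^2}\right] = \frac{1}{\sqrt{1-2\beta v}}\exp\left(\frac{\beta m^2}{1-2\beta v}\right),
\end{equation*}
whereas the expectation is infinite once $\beta \geq 1/(2v)$. Applying this with $\beta = \alpha(t_2-t_1)$ and $Z = Y_t$ yields an explicit expression for each term appearing in the supremum above.

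The key point is to ensure that the admissibility condition $\beta < 1/(2v_t)$ holds uniformly on $[t_1,t_2]$, and this is precisely what forces $t_2-t_1$ to be small. Since $v_t$ is increasing in $t$ and bounded above by its stationary value $\sigma_Y^2/(2\lambda_Y)$, one has $1/(2v_t) \geq \lambda_Y/\sigma_Y^2$ for all $t$; hence the choice $t_2-t_1 < \lambda_Y/(\alpha\sigma_Y^2)$ guarantees that $1-2\alpha(t_2-t_1)v_t$ stays bounded away from zero on the whole interval. The remaining factor $\exp(\beta m_t^2/(1-2\beta v_t))$ is continuous in $t$, hence bounded on the compact set $[t_1,t_2]$ because $m_t$ is bounded there. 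Consequently the supremum is finite, proving the claim; the case $t_1>0$ requires no separate treatment, as the bounds hold for any $0 \leq t_1 < t_2$.

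I expect the only genuinely delicate step to be this uniform control of the Gaussian exponential moment. In contrast to the linear exponent $\alpha\int Y_s\,ds$ handled in Lemma~\ref{expboundnovikov}, the quadratic exponent $\alpha Y_t^2$ can make the moment blow up entirely once $t_2-t_1$ is too large relative to $\lambda_Y/\sigma_Y^2$, which is exactly why the statement only asserts finiteness on sufficiently short intervals and why the argument must subsequently be iterated over a fine partition when invoking Novikov's condition in the form of \cite[Corollary 3.5.14]{karatzas.shreve.91}.
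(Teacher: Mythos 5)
Your proposal is correct and follows essentially the same route as the paper's proof: the identical Jensen--Tonelli reduction to $\sup_{t_1\leq t\leq t_2}E[\exp(\alpha(t_2-t_1)Y_t^2)]$, the Gaussian exponential-square moment, the variance bound $v_t\leq \sigma_Y^2/(2\lambda_Y)$, and the same threshold $t_2-t_1<\lambda_Y/(\alpha\sigma_Y^2)$. If anything, you are slightly more careful than the paper, which quotes only the centered formula $E[\exp(\xi B_t^2)]=(1-2\xi t)^{-1/2}$ and leaves the nonzero mean $m_t$ implicit, whereas you verify explicitly that the noncentrality factor $\exp\bigl(\beta m_t^2/(1-2\beta v_t)\bigr)$ stays bounded on the compact interval.
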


\begin{proof}
As in Lemma~\ref{expboundnovikov} Jensen's inequality and Tonelli's Theorem show that
\begin{equation}\label{eq:sup}
E \left[\exp{\left(\alpha \int_{t_1}^{t_2} Y_s^2 ds\right)}\right] \leq \sup_{t_1 \leq t \leq t_2} E\left[\exp{\left(\alpha (t_2-t_1) Y_t^2\right)}\right].
\end{equation}
Note that $Y_t$ is Gaussian with $\mathrm{Var}(Y_t) = \frac{\sigma_Y^2}{2 \lambda_Y} (1-e^{-2 \lambda_Y t})\leq \sigma_Y^2/(2 \lambda_Y)$ and that a normally distributed random variable $B_t \sim \mathcal{N}(0,t)$ satisfies $E[\exp{(\xi B_t^2)}]= (1-2\xi t)^{-1/2}$, which is finite for $\xi < 1/2t$. Hence the right-hand side of \eqref{eq:sup} is finite for $t_2-t_1 < \lambda_Y/\alpha \sigma_Y^2$.
\end{proof}

With the finite horizon bounds (\ref{bound1infko}-\ref{bound2infko}) at hand, we can now verify the long-run optimality of the candidate policy $\widehat{\pi}^\infty_t$. The interpretation of the parameter restrictions \eqref{determinantcondHko} for $\gamma<1$ resp.\ \eqref{assumptionbinfgammageq1ko} for $\gamma>1$ is analogous to the discussion for the Heston-type model after Theorem \ref{lemmaoptimalityinf}. For the parameter estimates of Barberis \cite{barberis.00}, Condition
\eqref{assumptionbinfgammageq1ko} is satisfied for risk aversions up to $\gamma=13.4$.

\begin{mythm}\label{lemmaoptinfiniteko}
Suppose that \eqref{determinantcondHko} is satisfied if $\gamma<1$, or that
\begin{equation}\label{assumptionbinfgammageq1ko}
\left(1-2 \frac{\gamma-1}{\gamma} \rho^2\right)\sqrt{b^2-4 ac}+ b>0, \quad \text{if } \gamma >1,
\end{equation}
where $a,b$ and $c$ are defined as in Lemma~\ref{lemma:GHinftyko}. Then, the portfolio $\widehat{\pi}^{\infty}_t = \frac{Y_t}{\gamma\sigma^2}+ \frac{\rho \sigma_Y}{\gamma\sigma} (B^\infty+C^\infty Y_t)$ is long-run optimal and the corresponding maximal equivalent safe rate is given by
\begin{equation*}
\lim_{T\rightarrow \infty} \frac{1}{(1-\gamma) T} \log{E[(X_T^{\widehat{\pi}^\infty})^{1-\gamma}]} =\frac{A^\infty}{1-\gamma}.
\end{equation*}
\end{mythm}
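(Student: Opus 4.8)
The plan is to follow the blueprint of Theorem~\ref{lemmaoptimalityinf}, replacing the square-root state process by the Ornstein--Uhlenbeck factor and, crucially, accounting for the fact that the relevant exponents in the state variable are now quadratic rather than linear. First I would combine Lemma~\ref{lem:dual} with the supermartingale property $E[X_T^\pi \widehat{Z}^\infty_T]\le x$ established in Lemma~\ref{probabilitymeasureFGHinftyko} to obtain, for every portfolio $\pi$,
$$\frac{1}{1-\gamma}E\left[(X_T^\pi)^{1-\gamma}\right]\le \frac{x^{1-\gamma}}{1-\gamma}E\left[(\widehat{Z}^\infty_T)^{1-1/\gamma}\right]^{\gamma}.$$
Taking $\frac{1}{(1-\gamma)T}\log(\cdot)$ and sending $T\to\infty$ yields a uniform upper bound on the long-run growth rate in terms of the deflator. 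Inserting the finite-horizon identity~\eqref{bound2infko} factors out $e^{A^\infty T}$, whose growth rate is exactly $A^\infty/(1-\gamma)$, so the upper bound equals $A^\infty/(1-\gamma)$ as soon as $E^{\widehat{\mathbb{P}}}[e^{(q(Y_T)-q(Y_0))/\gamma}]$ stays bounded as $T\to\infty$, where $q(y)=-B^\infty y-\tfrac12 C^\infty y^2$.

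Next I would determine the law of the factor under the myopic measure $\widehat{\mathbb{P}}$ from~\eqref{eq:myopicko}. Applying Girsanov's theorem, the drift adjustments of $W^Y$ and $W$ are affine in $Y_t$, so $Y_t$ remains an Ornstein--Uhlenbeck process under $\widehat{\mathbb{P}}$, with unchanged volatility $\sigma_Y$ and a modified mean-reversion speed $\lambda^{\widehat{\mathbb{P}},Y}$. Collecting terms and using the definitions of $b,c$ together with the quadratic equation~\eqref{odeHinfinityko} for $C^\infty$, I expect the pleasant simplification $\lambda^{\widehat{\mathbb{P}},Y}=b+2cC^\infty=\sqrt{b^2-4ac}$; in any case $\lambda^{\widehat{\mathbb{P}},Y}>0$ is guaranteed by~\eqref{upperboundBCinfy}, so $Y_t$ admits a Gaussian stationary law $\mathcal{N}\!\big(m,\sigma_Y^2/(2\lambda^{\widehat{\mathbb{P}},Y})\big)$. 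By the ergodic theorem, both $E^{\widehat{\mathbb{P}}}[e^{q(Y_T)/\gamma}]$ and $E^{\widehat{\mathbb{P}}}[e^{q(Y_T)}]$ then converge to the corresponding Gaussian integrals of $e^{q(y)/\gamma}$ and $e^{q(y)}$.

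The hard part is the convergence of these Gaussian integrals, which is genuinely more delicate than in the Heston case because $q$ is \emph{quadratic}: the $y^2$-coefficient of $q(y)/\gamma$, resp.\ $q(y)$, equals $-C^\infty/(2\gamma)$, resp.\ $-C^\infty/2$, and these are \emph{positive} for $\gamma>1$ (since $C^\infty<0$). Integration against the stationary density, whose tail behaves like $e^{-y^2\lambda^{\widehat{\mathbb{P}},Y}/\sigma_Y^2}$, is therefore finite if and only if the net $y^2$-coefficient is negative; the binding constraint comes from the lower-bound integral $\int e^{q(y)}\,\nu(dy)$, which is finite precisely when $\lambda^{\widehat{\mathbb{P}},Y}+C^\infty\sigma_Y^2/2>0$. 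Substituting $\lambda^{\widehat{\mathbb{P}},Y}=\sqrt{b^2-4ac}$ and $C^\infty=(\sqrt{b^2-4ac}-b)/(2c)$, I would verify that this inequality is exactly~\eqref{assumptionbinfgammageq1ko}; moreover, since $C^\infty\sigma_Y^2/2<C^\infty\sigma_Y^2/(2\gamma)$ for $\gamma>1$, it automatically implies the weaker condition governing the upper-bound integral $\int e^{q(y)/\gamma}\,\nu(dy)$. For $\gamma<1$ one has $C^\infty>0$, both quadratic terms carry the stabilizing sign, and the integrals converge under the sole assumption~\eqref{determinantcondHko} securing the existence of $C^\infty$. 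Finally, the matching lower bound follows from~\eqref{bound1infko} by the same ergodic argument: the candidate $\widehat{\pi}^\infty$ attains the upper bound, so its equivalent safe rate equals $A^\infty/(1-\gamma)$, establishing long-run optimality.
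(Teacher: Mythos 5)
Your proposal is correct and follows essentially the same route as the paper's proof: the duality bound of Lemma~\ref{lem:dual} combined with the deflator from Lemma~\ref{probabilitymeasureFGHinftyko}, the finite-horizon identities (\ref{bound1infko}--\ref{bound2infko}), Girsanov's theorem to show $Y_t$ remains Ornstein--Uhlenbeck under $\widehat{\mathbb{P}}$ with mean-reversion speed positive by \eqref{upperboundBCinfy}, and the ergodic theorem reducing everything to the finiteness of Gaussian integrals of $e^{q(y)/\gamma}$ and $e^{q(y)}$ against the stationary law. Your two refinements are both verifiably correct and sharpen steps the paper leaves implicit: the identity $\lambda^{\widehat{\mathbb{P}},Y}=b+2cC^\infty=\sqrt{b^2-4ac}$ does hold with $C^\infty=(\sqrt{b^2-4ac}-b)/(2c)$, and for $\gamma>1$ (where $C^\infty<0$) the binding constraint is indeed the lower-bound integral, whose finiteness condition $\lambda^{\widehat{\mathbb{P}},Y}+C^\infty\sigma_Y^2/2>0$ is algebraically equivalent to \eqref{assumptionbinfgammageq1ko} and implies the weaker condition $\lambda^{\widehat{\mathbb{P}},Y}+C^\infty\sigma_Y^2/(2\gamma)>0$ that the paper checks explicitly before delegating the attainment part to ``a similar argument.''
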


\begin{proof}
Lemma~\ref{lem:dual} and a similar argument as in Lemma~\ref{lemmaoptimalityinf} show that the upper bound equals  $A^\infty /(1-\gamma)$ if $e^{q(Y_T)}$ has bounded expectation under $\widehat{\mathbb{P}}$ as $T \rightarrow \infty$, i.e.,
\begin{equation*}
\lim_{T \rightarrow \infty} E^{\widehat{\mathbb{P}}} \left[e^{\frac{q(Y_T)}{\gamma}}\right] < \infty,
\end{equation*}
with $q$ defined as in Lemma~\ref{lemmafinitehorizonboundsko}. To see that this indeed holds true, first observe that the dynamics of $Y_t$ under the measure $\widehat{\mathbb{P}}$ are given by
\begin{equation*}
d Y_t = \Big[\underbrace{\lambda_Y \bar Y + B^\infty \sigma_Y^2\left(1+\frac{1-\gamma}{\gamma}\rho^2\right)}_{=: L}-\underbrace{\left(\lambda_Y - \frac{\sigma_Y \rho}{\sigma} \frac{1-\gamma}{\gamma}- C^\infty \sigma_Y^2 \left(1+ \frac{1-\gamma}{\gamma}\rho^2\right)\right)}_{=: K}Y_t\Big] dt+ \sigma_Y d W^{\widehat{\mathbb{P}},Y}_t
\end{equation*}
where 
\begin{equation*}
 W^{\widehat{\mathbb{P}},Y}_t= W^Y_t-\int_0^t \sigma_Y (B^\infty + C^\infty Y_s)ds- \int_0^t \frac{\rho}{\sigma} \frac{1-\gamma}{\gamma} (Y_s+ \rho \sigma \sigma_Y(B^\infty+ C^\infty Y_s))ds 
\end{equation*} 
is a $\widehat{\mathbb{P}}$-Brownian motion by Girsanov's theorem. Notice that $K$ is positive by~\eqref{upperboundBCinfy} and recall that the stationary law of the Ornstein-Uhlenbeck process $Y_t$ is Gaussian with mean $L/K$ and variance $\sigma_Y^2/2 K$ (cf., e.g., \cite{borodin.salminen.96}), so that its density function is given by
\begin{eqnarray*}
\nu(dx) &=& \sqrt{\frac{K}{\sigma_Y^ 2 \pi}} e^{- \frac{K}{\sigma_Y^2}\left(x-\frac{K}{L}\right)^2} dx.
\end{eqnarray*}
As a result, the ergodic theorem~\cite[Formula II.35]{borodin.salminen.96} shows that
\begin{eqnarray*}
\lim_{T\rightarrow \infty} E^{\widehat{\mathbb{P}}}\left[e^{\frac{ q(Y_T)}{\gamma}}\right] &=& \int_{-\infty}^\infty e^{-\frac{1}{\gamma}\left(B^\infty y+ C^\infty \frac{y^2}{2}\right)} \nu (d y)\\
 &=&  \sqrt{\frac{K}{\sigma_Y^ 2 \pi}} \int_{-\infty}^\infty  e^{-\frac{B^\infty}{\gamma} y -  \frac{C^\infty}{2\gamma} y^2- \frac{K}{\sigma_Y^2}\left(y-\frac{K}{L}\right)^2}d y < \infty,
\end{eqnarray*}
where we have used in the last step that $\frac{K}{\sigma_Y^2}+\frac{C^\infty}{2 \gamma} >0$, which is a consequence of Lemma~\ref{lemma:GHinftyko} and~\eqref{assumptionbinfgammageq1ko}.

Finally, a similar argument using~\eqref{bound1infko} shows that the upper bound is attained by wealth process corresponding to the portfolio $\widehat{\pi}^{\infty}_t$, so that the latter is indeed long-run optimal.
\end{proof}

 The finite-horizon bounds (\ref{bound1infko}-\ref{bound2infko}) again allow to assess the performance of the long-run optimal portfolio on any finite horizon. In Figure \ref{fig:esr_ko}, the latter is compared to the respective finite-horizons optimizers and to the optimal performance in a Black-Scholes model with the same mean returns and volatilities. In line with the substantial intertemporal hedging terms and the slower convergence reported in Figure \ref{fig:portfolio_ko}, the differences are much more pronounced here than for the stochastic volatility model studied in Section \ref{sec:heston2}. Indeed, for sufficiently long horizons, the hedging opportunities for return predictability allow to achieve considerable welfare gains compared to a Black-Scholes model with the same mean returns and volatilities. For short horizons, however, the long-run portfolio performs badly, as its large intertemporal hedging term leads to a far too risky investment compared to the finite-horizon optimizer in this case. 
 
 \begin{figure}
\centering
\includegraphics[width=0.8\textwidth]{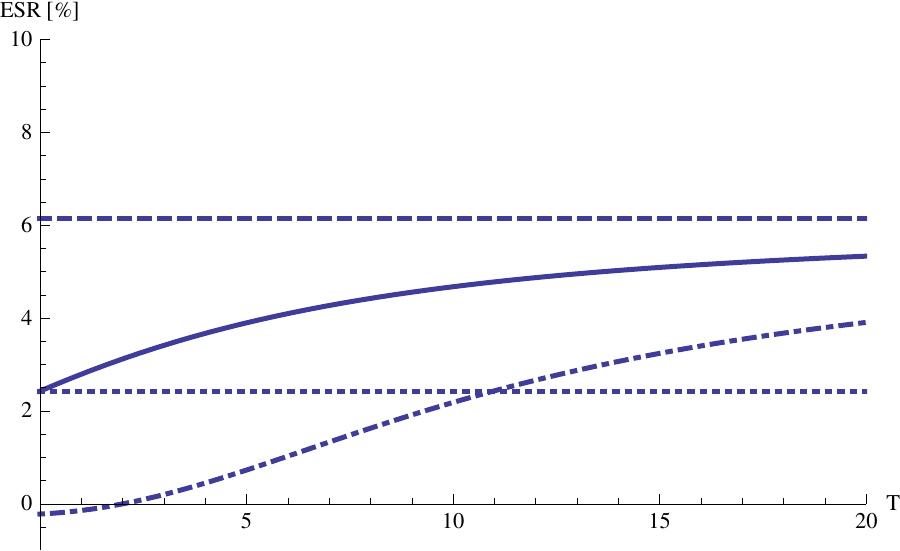}
\caption{\label{fig:esr_ko}
Kim and Omberg model: long-run equivalent safe rate (dashed) and equivalent safe rate for investment on $[0,T]$, for the finite-horizon optimizer $\widehat\pi_t$ on $[0,T]$ (solid), the long-run optimizer $\widehat\pi^\infty$ (dot-dashed), and the optimizer in a Black-Scholes model with the same mean returns and volatilities (dotted). Risk aversion is $\gamma=5$, the (monthly) parameters are $r=0.14\%$, $\sigma=4.36\%$, $\bar{Y}=0.34\%$, $\lambda_Y=2.26\%$, $\sigma_Y=0.08\%$, $\rho=-0.935$ (cf.~\cite{barberis.00,wachter.02}), and
$Y_0=\bar{Y}$.
}
\end{figure}

It is important to note that the substantial benefits from exploiting return predictability are contingent on the simple frictionless model considered here. Parameter uncertainty considerably weakens these results \cite{barberis.00}, and trading costs are also bound to play a key role for the market timing strategies needed to hedge against the future evolution of the state variable. Numerical results on portfolio choice with predictability and transaction costs are reported in \cite{lynch.tan.11}; recent general asymptotic results \cite{martin.12,soner.touzi.13,kallsen.muhlekarbe.13} have also opened the door to explicit results for small costs.

\bibliographystyle{abbrv}
\bibliography{guide}

\end{document}